\providecommand{\U}[1]{\protect\rule{.1in}{.1in}}
\newtheorem{theorem}{Theorem}[section]
\newtheorem{assumption}{Assumption}[section]
\newtheorem{corollary}{Corollary}
\newtheorem{lemma}{Lemma}[section]
\newtheorem{proposition}[theorem]{Proposition}
\newenvironment{proof}[1][Proof]{\noindent \textbf{#1.} }{\  \rule{0.5em}{0.5em}}
\newcommand{\GG}[1]{}
\begin{document}

\title{An Asymptotically F-Distributed Chow Test in the Presence of
Heteroscedasticity and Autocorrelation\thanks{We thank Derrick H. Sun for
excellent research assistance. }}
\author{Yixiao Sun\\Department of Economics\\UC San Diego, USA
\and Xuexin Wang\\School of Economics and WISE\\Xiamen University, China}
\maketitle

\begin{abstract}
This study proposes a simple, trustworthy Chow test in the presence of
heteroscedasticity and autocorrelation. The test is based on a series
heteroscedasticity and autocorrelation robust variance estimator with
judiciously crafted basis functions. Like the Chow test in a classical normal
linear regression, the proposed test employs the standard F distribution as
the reference distribution, which is justified under fixed-smoothing
asymptotics. Monte Carlo simulations show that the null rejection probability
of the asymptotic F test is closer to the nominal level than that of the
chi-square test.

\medskip

\medskip

\noindent\textbf{Keywords: }Chow Test, F Distribution, Heteroscedasticity and
Autocorrelation, Structural Break.

\end{abstract}

\section{Introduction}

For predictive modeling and policy analysis using time series data, it is
important to check whether a structural relationship is stable over time. The
\citet{CHOW1960} test is designed to test whether a break takes place at a
given period in an otherwise stable relationship. The test is widely used in
empirical applications and has been included in standard econometric
textbooks. This paper considers the Chow test in the presence of
heteroscedasticity and autocorrelation. There is ample evidence that the Chow
test can have very large size distortions if heteroscedasticity and
autocorrelation are not accounted for (e.g., \citet{KRAMER1989} and
\citet{GILES1992}). Even if we account for them using heteroscedasticity and
autocorrelation robust (HAR) variance estimators (e.g., \cite{NW1987} and
\cite{Andrews1991}), the test can still over-reject the null hypothesis by a
large margin if chi-square critical values are used\footnote{When the Chow
test is performed on a single coefficient, normal critical values are
typically used on the t statistic. For now, we focus only on the Wald-type
Chow test for more than one coefficients so that chi-square critical values
are used.}. This is a general problem for any HAR inference, as the chi-square
approximation ignores the often substantial finite sample randomness of the
HAR variance estimator. To address this problem, the recent literature has
developed a new type of asymptotics known as fixed-smoothing asymptotics (see,
e.g., \cite{KV2002a, KV2002b, KV2005} for early seminal contributions). It is
now well known that the fixed-smoothing asymptotic approximation is more
accurate than the chi-square approximation. This has been confirmed by ample
simulation evidence and supported by higher-order asymptotic theory\ in
\citet{J2004} and
%TCIMACRO{\TeXButton{\citet*{SPJ2008}}{\citet*{SPJ2008}}}%
%BeginExpansion
\citet*{SPJ2008}%
%EndExpansion
.

In this study, we employ the series HAR variance estimator to implement the
Chow test in a time series regression where the regressors and regression
errors are autocorrelated. This type of HAR variance estimator is the series
counterpart of the kernel HAR variance estimator. The advantage of using the
series HAR variance estimator is that we can design the basis functions so
that the fixed-smoothing asymptotic distribution is the standard F
distribution. This is in contrast to commonly used kernel HAR variance
estimators where the fixed-smoothing asymptotic distributions are nonstandard
and critical values have to be simulated\footnote{In the series case,
fixed-smoothing asymptotics holds the number of basis functions fixed as the
sample size increases. In the kernel case, fixed-smoothing asymptotics holds
the truncation lag parameter fixed at a certain proportion of the sample
size.}.

To establish the asymptotic\ F theory for the Chow test under fixed-smoothing
asymptotics, we have to transform the usual orthonormal bases such as sine and
cosine bases using the Gram--Schmidt orthonormalization. This is because,
unlike the HAR inference in a regression with stationary regressors and
regression errors, using the usual bases as in
%TCIMACRO{\TeXButton{\citet{S2011, S2013}}{\citet{S2013}} }%
%BeginExpansion
\citet{S2013}
%EndExpansion
does not lead to a standard fixed-smoothing asymptotic distribution, since the
regressors in the regression for the structural break test are identically
zero before or after the break point and are thus not stationary. The
Gram--Schmidt orthonormalization ensures that the transformed bases are
orthonormal with respect to a special inner product that is built into the
problem under consideration. The asymptotic F test is very convenient to use,
as the F critical values are readily available from standard statistical
tables and programming environments.

Monte Carlo simulation experiments show that the F test based on the
transformed Fourier bases is as accurate as the nonstandard test based on the
usual Fourier bases. The F test and nonstandard test have the same
size-adjusted power as the corresponding chi-square tests but much more
accurate size. Given its convenience, competitive power, and higher size
accuracy, we recommend the F test for practical use.

Our F test theory generalizes the classical Chow test in a linear normal
regression where the F distribution is the exact finite sample distribution.
The main departures are that we do not make the normality assumption and that
we allow for heteroscedasticity and autocorrelation of unknown forms. Without
restrictive assumptions such as normality and strict exogeneity, it is in
general not possible to obtain the exact finite sample distribution. Instead,
we employ the fixed-smoothing asymptotics to show that the Wald statistic is
asymptotically F distributed.

This study contributes to the asymptotic F test theory in the HAR literature.
The asymptotic F theory has been developed in a number of papers including
\citet{S2011, SK2012, S2013, HS2017, LLSW2018, LS2018, WS2019, MSW2019}.
However, none of these studies considers the case where the regressors take
the special form of nonstationarity as we consider here.
\citet{Cho_Vogelsang2017} consider fixed-b asymptotics for testing structural
breaks, but they consider only kernel HAR variance estimators. As a result,
the fixed-smoothing asymptotic distributions they obtained are highly nonstandard.

The rest of this paper is organized as follows. Section \ref{basic_setting}
presents the basic setting and introduces the test statistics. Section
\ref{FSA} establishes the fixed-smoothing asymptotics of the F and t
statistics. Section \ref{Sec: F and t} develops asymptotically valid F and t
tests. Section \ref{Sec: Extend} extends the basic regression model to include
other covariates whose coefficients are known to be stable over time. Section
\ref{Sec: Simulation} reports the simulation evidence. The last section
concludes. Proofs are given in the appendix.

\section{Basic Setting and Test Statistics\label{basic_setting}}

Given the time series observations $\left\{  X_{t}\in\mathbb{R}^{m},Y_{t}%
\in\mathbb{R}\right\}  _{t=1}^{T},$ we consider the model%
\[
Y_{t}=X_{t}\cdot1\left\{  t\leq\left[  \lambda T\right]  \right\}  \cdot
\beta_{1}+X_{t}\cdot1\left\{  t\geq\left[  \lambda T\right]  +1\right\}
\cdot\beta_{2}+u_{t},
\]
for $t=1,2,\ldots,T$ where the unobserved $u_{t}$ satisfies $E\left(
X_{t}u_{t}\right)  =0.$ In the above, $\lambda$ is a known parameter in
$(0,1)$ so that $\left[  \lambda T\right]  $ is the period where the
structural break may take place. The effects of $X_{t}$ on $Y_{t}$ before and
after the break are $\beta_{1}\in\mathbb{R}^{m}$ and $\beta_{2}\in
\mathbb{R}^{m},$ respectively. We allow $X_{t}u_{t}$ to exhibit
autocorrelation of unknown forms. In particular, we allow $u_{t}$ to be
heteroskedastic so that $E(u_{t}^{2}|X_{t})$ is a nontrivial function of
$X_{t}.$

We are interested in testing the null of $H_{0}$: $\mathcal{R}\beta
_{1}=\mathcal{R}\beta_{2}$ against the alternative $H_{1}:$ $\mathcal{R}%
\beta_{1}\neq\mathcal{R}\beta_{2}$ for some $p\times m$ matrix $\mathcal{R}.$
When $\mathcal{R}$ is the $m\times m$ identity matrix, we aim at testing
whether $\beta_{1}$ is equal to $\beta_{2}.$ For the moment, we consider the
case that all coefficients are subject to a possible break. In Section
\ref{Sec: Extend}, we consider the case that some of the coefficients are
known to be time invariant.

Let
\[
X_{1t}=X_{t}\cdot1\left\{  t\leq\left[  \lambda T\right]  \right\}  \text{,
}X_{2t}=X_{t}\cdot1\left\{  t\geq\left[  \lambda T\right]  +1\right\}  .
\]
Note that both $X_{1t}$ and $X_{2t}$ are nonstationary. The form of the
nonstationarity makes the problem at hand unique. Let $\beta=\left(  \beta
_{1}^{\prime},\beta_{2}^{\prime}\right)  ^{\prime}$ and $\tilde{X}_{t}%
=(X_{1t},X_{2t}).$ Then%
\[
Y_{t}=\tilde{X}_{t}\beta+u_{t},
\]
and the hypotheses of interest become $H_{0}:R\beta=0$ and $H_{1}:R\beta\neq0$
for $R=[\mathcal{R},-\mathcal{R}]\in\mathbb{R}^{p\times2m}.$

Denote $\tilde{X}=(\tilde{X}_{1}^{\prime},\ldots,\tilde{X}_{T}^{\prime
})^{\prime}$, $Y=\left(  Y_{1},\ldots,Y_{T}\right)  ^{\prime},$ and $u=\left(
u_{1},\ldots,u_{T}\right)  ^{\prime}.$ We estimate $\beta$ by OLS:%
\[
\hat{\beta}=(\tilde{X}^{\prime}\tilde{X})^{-1}\tilde{X}^{\prime}Y.
\]
The OLS estimator $\hat{\beta}$ satisfies%
\[
\sqrt{T}(\hat{\beta}-\beta)=\hat{Q}^{-1}\frac{1}{\sqrt{T}}\sum_{t=1}^{T}%
\tilde{X}_{t}^{\prime}u_{t},
\]
where
\[
\hat{Q}=\frac{\tilde{X}^{\prime}\tilde{X}}{T}=\left(
\begin{array}
[c]{cc}%
T^{-1}\sum_{t=1}^{[T\lambda]}X_{t}^{\prime}X_{t} & O\\
O & T^{-1}\sum_{T[\lambda]+1}^{T}X_{t}^{\prime}X_{t}%
\end{array}
\right)
\]
and $O$ is a matrix of zeros. To make inferences on $\beta$ such as testing
whether $R\beta$ is zero, we need to estimate the variance of $T^{-1/2}%
\sum_{t=1}^{T}\tilde{X}_{t}^{\prime}u_{t}.$ To this end, we first construct
the residual $\hat{u}_{t}=Y_{t}-\tilde{X}_{t}\hat{\beta},$ which serves as an
estimate for $u_{t}.$ Given a set of basis functions $\left\{  \phi_{j}\left(
\cdot\right)  \right\}  _{j=1}^{K},$ we then construct the series estimator of
the variance as
\[
\hat{\Omega}=\frac{1}{K}\sum_{j=1}^{K}\left[  \frac{1}{\sqrt{T}}\sum_{t=1}%
^{T}\phi_{j}\left(  \frac{t}{T}\right)  \tilde{X}_{t}^{\prime}\hat{u}%
_{t}\right]  ^{\otimes2},
\]
where, for a column vector $a,$ $a^{\otimes2}$ is the outer product of $a,$
that is, $a^{\otimes2}=aa^{\prime}.$ The asymptotic variance of $R\sqrt
{T}(\hat{\beta}-\beta)$ is then estimated by $R\hat{Q}^{-1}\hat{\Omega}\hat
{Q}^{-1}R^{\prime}.$

The Wald statistic for testing $H_{0}:R\beta=0$ against $H_{1}:R\beta\neq0$ is%
\[
F_{T}=T\cdot(R\hat{\beta})^{\prime}\left[  R\hat{Q}^{-1}\hat{\Omega}\hat
{Q}^{-1}R^{\prime}\right]  ^{-1}(R\hat{\beta}).
\]
When $p=1$ and we test $H_{0}:R\beta=0$ against a one-sided alternative, say,
$H_{1}:R\beta>0$, we can construct the t statistic:%
\[
t_{T}=\frac{\sqrt{T}\cdot R\hat{\beta}}{\left[  R\hat{Q}^{-1}\hat{\Omega}%
\hat{Q}^{-1}R^{\prime}\right]  ^{1/2}}.
\]
The forms of the F and t statistics are standard.

\section{Fixed-smoothing Asymptotic Distributions\label{FSA}}

To establish the asymptotic distributions of $F_{T}$ and $t_{T},$ we maintain
the following three assumptions:

\begin{assumption}
\label{Assumption XX}$T^{-1}\sum_{t=1}^{[Tr]}X_{t}^{\prime}X_{t}%
\rightarrow^{p}Q\cdot r$ uniformly over $r\in\lbrack0,1]$ and $Q$ is invertible.
\end{assumption}

\begin{assumption}
\label{Assumption XU}$T^{-1/2}\sum_{t=1}^{[Tr]}X_{t}^{\prime}u_{t}%
\rightarrow^{d}\Lambda W_{m}\left(  r\right)  $ for $r\in\lbrack0,1]$ where
$\Omega=\Lambda\Lambda^{\prime}$ is the long run variance of $\left\{
X_{t}^{\prime}u_{t}\right\}  $ and $W_{m}\left(  \cdot\right)  $ is an
$m\times1$ standard Brownian process.
\end{assumption}

\begin{assumption}
\label{Weight_function}The basis functions $\phi_{j}\left(  \cdot\right)  $,
$j=1,2,\ldots,K$ are piecewise monotonic and piecewise continuously differentiable.
\end{assumption}

\begin{lemma}
\label{Lemma_1}Let Assumptions \ref{Assumption XX} and \ref{Assumption XU}
hold. Then%
\[
\sqrt{T}(\hat{\beta}-\beta):=\left(
\begin{array}
[c]{c}%
\sqrt{T}(\hat{\beta}_{1}-\beta_{1})\\
\sqrt{T}(\hat{\beta}_{2}-\beta_{2})
\end{array}
\right)  \rightarrow^{d}\left(
\begin{array}
[c]{c}%
Q^{-1}\Lambda\cdot\frac{1}{\lambda}\int_{0}^{\lambda}dW_{m}\left(
\lambda\right) \\
Q^{-1}\Lambda\cdot\frac{1}{1-\lambda}\int_{\lambda}^{1}dW_{m}\left(
\lambda\right)
\end{array}
\right)  .
\]
If Assumption \ref{Weight_function} also holds, then
\[
\frac{1}{\sqrt{T}}\sum_{t=1}^{T}\phi_{j}\left(  \frac{t}{T}\right)  \tilde
{X}_{t}^{\prime}\hat{u}_{t}\rightarrow^{d}\left(
\begin{array}
[c]{c}%
\Lambda\cdot\int_{0}^{\lambda}\left[  \phi_{j}\left(  r\right)  -\bar{\phi
}_{j,1}\right]  dW_{m}\left(  r\right) \\
\Lambda\cdot\int_{\lambda}^{1}\left[  \phi_{j}\left(  r\right)  -\bar{\phi
}_{j,2}\right]  dW_{m}\left(  r\right)
\end{array}
\right)  ,
\]
where%
\[
\bar{\phi}_{j,1}=\frac{1}{\lambda}\int_{0}^{\lambda}\phi_{j}\left(  s\right)
ds\text{ and }\bar{\phi}_{j,2}=\frac{1}{1-\lambda}\int_{\lambda}^{1}\phi
_{j}\left(  s\right)  ds.
\]

\end{lemma}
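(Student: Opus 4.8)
The plan is to prove the two convergence claims separately, relying on the functional central limit theorem (FCLT) in Assumption \ref{Assumption XU} together with the continuous mapping theorem. For the first claim, I would start from the exact representation $\sqrt{T}(\hat\beta-\beta)=\hat Q^{-1}T^{-1/2}\sum_{t=1}^{T}\tilde X_t'u_t$ already displayed in the text. The key observation is that the block-diagonal structure of $\hat Q$ decouples the two subsamples: the top block involves only $t\le[\lambda T]$ and the bottom block only $t\ge[\lambda T]+1$. First I would use Assumption \ref{Assumption XX} to get $T^{-1}\sum_{t=1}^{[\lambda T]}X_t'X_t\to^p Q\lambda$ and $T^{-1}\sum_{[\lambda T]+1}^{T}X_t'X_t\to^p Q(1-\lambda)$, so that $\hat Q^{-1}$ converges in probability to the block-diagonal matrix with blocks $(Q\lambda)^{-1}$ and $(Q(1-\lambda))^{-1}$. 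Next, for the score vector, the top block is $T^{-1/2}\sum_{t=1}^{[\lambda T]}X_t'u_t\to^d\Lambda W_m(\lambda)=\Lambda\int_0^\lambda dW_m(r)$ and the bottom block is $T^{-1/2}\sum_{[\lambda T]+1}^{T}X_t'u_t\to^d\Lambda[W_m(1)-W_m(\lambda)]=\Lambda\int_\lambda^1 dW_m(r)$, both immediate from Assumption \ref{Assumption XU} and the independence of Brownian increments. Combining via Slutsky's theorem and multiplying each block yields the stated limit, e.g. $(Q\lambda)^{-1}\Lambda\int_0^\lambda dW_m=Q^{-1}\Lambda\cdot\lambda^{-1}\int_0^\lambda dW_m$.

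For the second claim, I would first substitute $\hat u_t=u_t-\tilde X_t(\hat\beta-\beta)$ into the weighted sum, giving two pieces: a ``leading'' term $T^{-1/2}\sum_t\phi_j(t/T)\tilde X_t'u_t$ and a ``correction'' term $-T^{-1/2}\sum_t\phi_j(t/T)\tilde X_t'\tilde X_t(\hat\beta-\beta)$. Because $\tilde X_t=(X_{1t},X_{2t})$ is block-structured, I would again treat the two coordinate blocks separately; focus on the first block (the second is symmetric). The leading term in the first block is $T^{-1/2}\sum_{t=1}^{[\lambda T]}\phi_j(t/T)X_t'u_t$, which converges to $\Lambda\int_0^\lambda\phi_j(r)\,dW_m(r)$ by a stochastic-integral convergence result; here Assumption \ref{Weight_function}, which makes $\phi_j$ of bounded variation via piecewise monotonicity, is exactly what licenses passing to the It\^o integral against the Brownian limit. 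For the correction term, I would factor out $\sqrt{T}(\hat\beta_1-\beta_1)$ and show that $T^{-1}\sum_{t=1}^{[\lambda T]}\phi_j(t/T)X_t'X_t\to^p Q\int_0^\lambda\phi_j(r)\,dr=Q\lambda\bar\phi_{j,1}$, again using Assumption \ref{Assumption XX} and the piecewise continuity of $\phi_j$. Multiplying by the limit $Q^{-1}\Lambda\,\lambda^{-1}\int_0^\lambda dW_m$ from the first claim, the correction contributes $-\Lambda\bar\phi_{j,1}\int_0^\lambda dW_m(r)$, and combining with the leading term gives $\Lambda\int_0^\lambda[\phi_j(r)-\bar\phi_{j,1}]\,dW_m(r)$, as asserted.

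The main obstacle I anticipate is the rigorous justification of the stochastic-integral convergence $T^{-1/2}\sum_{t=1}^{[\lambda T]}\phi_j(t/T)X_t'u_t\to^d\Lambda\int_0^\lambda\phi_j\,dW_m$. This is not a direct consequence of the FCLT plus continuous mapping, because $r\mapsto\int_0^r\phi_j\,dW$ is not a continuous functional of the Brownian path in the uniform topology. The clean route is summation by parts: writing the partial-sum process $S_T(r)=T^{-1/2}\sum_{t=1}^{[Tr]}X_t'u_t$, I would express $T^{-1/2}\sum_t\phi_j(t/T)X_t'u_t$ as a Riemann--Stieltjes integral $\int_0^\lambda\phi_j(r)\,dS_T(r)$, integrate by parts to move the differential onto the deterministic, bounded-variation function $\phi_j$ (legitimate precisely because of Assumption \ref{Weight_function}), apply the continuous mapping theorem to the resulting functional of $S_T(\cdot)$, and then integrate by parts back in the limit to recover the It\^o integral. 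This handles the piecewise-monotonic, piecewise-$C^1$ basis functions uniformly and is the technical crux; the remaining steps are routine applications of Slutsky's theorem and the block structure.
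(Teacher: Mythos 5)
Your proposal is correct and follows essentially the same route as the paper: the same exact representation $\sqrt{T}(\hat\beta-\beta)=\hat Q^{-1}T^{-1/2}\sum_t\tilde X_t'u_t$ exploited through the block-diagonal structure of $\hat Q$ plus Slutsky for the first claim, and for the second claim the same decomposition $\hat u_t=u_t-\tilde X_t(\hat\beta-\beta)$ into a leading stochastic-integral term and a correction term whose probability limit $Q\lambda\bar\phi_{j,1}$ (resp. $Q(1-\lambda)\bar\phi_{j,2}$) cancels against the limit of $\sqrt{T}(\hat\beta-\beta)$ to produce the demeaned integrands. The only difference is that you spell out the summation-by-parts justification of $T^{-1/2}\sum_{t\le[\lambda T]}\phi_j(t/T)X_t'u_t\rightarrow^d\Lambda\int_0^\lambda\phi_j\,dW_m$, a step the paper passes over with ``it is not hard to show,'' so your write-up is if anything more complete on that technical point.
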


Note that $\frac{1}{\lambda}\int_{0}^{\lambda}dW_{m}\left(  \lambda\right)  $
and $\frac{1}{1-\lambda}\int_{\lambda}^{1}dW_{m}\left(  \lambda\right)  $ are
the average changes of the Brownian motion over the intervals $\left[
0,\lambda\right]  $ and $\left[  \lambda,1\right]  ,$ respectively. Lemma
\ref{Lemma_1} shows that $\sqrt{T}(\hat{\beta}_{1}-\beta_{1})$ and $\sqrt
{T}(\hat{\beta}_{2}-\beta_{2})$ are (matrix) proportional to the average
changes.\ Given the independence of these changes over any non-overlapping
intervals, $\sqrt{T}(\hat{\beta}_{1}-\beta_{1})$ and $\sqrt{T}(\hat{\beta}%
_{2}-\beta_{2})$ are asymptotically independent.

Note that $\bar{\phi}_{j,1}$ can be regarded as an average of $\phi_{j}\left(
\cdot\right)  $ over the interval $\left[  0,\lambda\right]  .$ Similarly,
$\bar{\phi}_{j,2}$ can be regarded as an average of $\phi_{j}\left(
\cdot\right)  $ over the interval $\left[  \lambda,1\right]  .$ So $\phi
_{j}\left(  r\right)  -\bar{\phi}_{j,1}$ and $\phi_{j}\left(  r\right)
-\bar{\phi}_{j,2}$ are the demeaned versions of $\phi_{j}\left(  r\right)  $
over the intervals $\left[  0,\lambda\right]  $ and $\left[  \lambda,1\right]
,$ respectively.

Using Lemma \ref{Lemma_1}, we can prove our main theorem below.

\begin{theorem}
\label{Theorem_main1}Let Assumptions \ref{Assumption XX}%
--\ref{Weight_function} hold. Then, under the null hypothesis,
\begin{align}
F_{T}  &  \rightarrow^{d}\left[  \frac{1}{\lambda}\int_{0}^{\lambda}%
dW_{p}\left(  \lambda\right)  -\frac{1}{1-\lambda}\int_{\lambda}^{1}%
dW_{p}\left(  \lambda\right)  \right]  ^{\prime}\times\left[  \frac{1}{K}%
\sum_{j=1}^{K}\left\{  \int_{0}^{1}\tilde{\phi}_{j}\left(  r;\lambda\right)
dW_{p}\left(  r\right)  \right\}  ^{\otimes2}\right]  ^{-1}\nonumber\\
&  \times\left[  \frac{1}{\lambda}\int_{0}^{\lambda}dW_{p}\left(
\lambda\right)  -\frac{1}{1-\lambda}\int_{\lambda}^{1}dW_{p}\left(
\lambda\right)  \right] \nonumber\\
&  :=F_{\infty}, \label{F_limit}%
\end{align}
where
\begin{equation}
\tilde{\phi}_{j}\left(  r;\lambda\right)  =\frac{1}{\lambda}\left[  \phi
_{j}\left(  r\right)  -\bar{\phi}_{j,1}\right]  \cdot1\left\{  r\leq
\lambda\right\}  -\frac{1}{1-\lambda}\left[  \phi_{j}\left(  r\right)
-\bar{\phi}_{j,2}\right]  \cdot1\left\{  r>\lambda\right\}  .
\label{transformed_phi_fun}%
\end{equation}

When $p=1,$%
\[
t_{T}\rightarrow^{d}\left[  \frac{1}{\lambda}\int_{0}^{\lambda}dW_{p}\left(
\lambda\right)  -\frac{1}{1-\lambda}\int_{\lambda}^{1}dW_{p}\left(
\lambda\right)  \right]  \times\left[  \frac{1}{K}\sum_{j=1}^{K}\left\{
\int_{0}^{1}\tilde{\phi}_{j}\left(  r;\lambda\right)  dW_{p}\left(  r\right)
\right\}  ^{\otimes2}\right]  ^{-1/2}:=t_{\infty}.
\]

\end{theorem}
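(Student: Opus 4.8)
The plan is to read the joint weak limit of the three building blocks of $F_T$ off Lemma \ref{Lemma_1}, pass everything through the continuous mapping theorem, and then apply a rotation that strips all nuisance parameters out of the limit. Throughout I work under $H_0$, where $R\beta=0$ forces $R\hat\beta=R(\hat\beta-\beta)$, so that $\sqrt T\,R\hat\beta=R\cdot\sqrt T(\hat\beta-\beta)$. The replacement of $u_t$ by the OLS residual $\hat u_t$, and the resulting recentering by $\bar\phi_{j,1},\bar\phi_{j,2}$, is already carried out inside Lemma \ref{Lemma_1}, so I can treat its limits as given.

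First I would record the limits of the pieces. Because $\hat Q$ is block diagonal, Assumption \ref{Assumption XX} gives $\hat Q^{-1}\to^{p}\mathrm{diag}(\lambda^{-1}Q^{-1},(1-\lambda)^{-1}Q^{-1})$, and with $R=[\mathcal R,-\mathcal R]$ this yields $R\hat Q^{-1}\to^{p}[\lambda^{-1}\mathcal RQ^{-1},-(1-\lambda)^{-1}\mathcal RQ^{-1}]$. Writing $A:=\mathcal RQ^{-1}\Lambda$, the first part of Lemma \ref{Lemma_1} gives
\[
\sqrt T\,R\hat\beta \to^{d} A\Big[\tfrac1\lambda\int_0^\lambda dW_m(r)-\tfrac1{1-\lambda}\int_\lambda^1 dW_m(r)\Big]=:A D .
\]
Combining $R\hat Q^{-1}$ with the second part of Lemma \ref{Lemma_1} and invoking the definition (\ref{transformed_phi_fun}) of $\tilde\phi_j$, the $j$-th score collapses to
\[
R\hat Q^{-1}\frac1{\sqrt T}\sum_{t=1}^{T}\phi_j(\tfrac tT)\tilde X_t'\hat u_t \to^{d} A\int_0^1\tilde\phi_j(r;\lambda)\,dW_m(r)=:A\xi_j .
\]
It is essential here that Lemma \ref{Lemma_1} delivers these limits \emph{jointly} (all are continuous functionals of the single partial-sum process in Assumption \ref{Assumption XU}), so the continuous mapping theorem applies to the whole collection. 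Hence $R\hat Q^{-1}\hat\Omega\hat Q^{-1}R'=\frac1K\sum_j[R\hat Q^{-1}\frac1{\sqrt T}\sum_t\phi_j\tilde X_t'\hat u_t]^{\otimes2}\to^{d}A[\frac1K\sum_j\xi_j^{\otimes2}]A'$, and
\[
F_T\to^{d}(A D)'\Big\{A\Big[\tfrac1K\sum_{j=1}^{K}\xi_j^{\otimes2}\Big]A'\Big\}^{-1}(A D).
\]
For the matrix inverse to be continuous at the limit, the $p\times p$ matrix $\frac1K\sum_j(A\xi_j)^{\otimes2}$ must be a.s.\ nonsingular; I would verify this as the one genuine regularity check, noting that (after the rotation below) it reduces to linear independence of $\tilde\phi_1,\dots,\tilde\phi_K$ together with $K\ge p$.

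The crux is to show this $W_m$-based limit does not depend on $A=\mathcal RQ^{-1}\Lambda$ and reduces to the $W_p$-based expression $F_\infty$. The key observation is that $A D$ and every $A\xi_j$ are the \emph{same} linear functionals applied to the process $AW_m(\cdot)$, since the constant matrix $A$ passes inside the stochastic integrals. Assuming $\mathcal R$ has full row rank and $\Omega$ is positive definite, $AA'$ is positive definite; set $B:=(AA')^{1/2}$, a symmetric invertible $p\times p$ matrix. Then $AW_m(\cdot)$ is a $p$-dimensional Brownian motion with covariance $AA'=B^2$, hence equal in distribution, as a process, to $BW_p(\cdot)$. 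Applying the functionals to $BW_p$ gives $(A D,A\xi_1,\dots,A\xi_K)\stackrel{d}{=}(B\tilde D,B\tilde\xi_1,\dots,B\tilde\xi_K)$, where $\tilde D$ and $\tilde\xi_j:=\int_0^1\tilde\phi_j(r;\lambda)dW_p(r)$ are $D$ and $\xi_j$ with $W_m$ replaced by $W_p$. Substituting and using $B'=B$,
\[
(A D)'\Big\{\tfrac1K\sum_{j}(A\xi_j)^{\otimes2}\Big\}^{-1}(A D)\stackrel{d}{=}\tilde D'B\Big[B\Big(\tfrac1K\sum_{j}\tilde\xi_j^{\otimes2}\Big)B\Big]^{-1}B\tilde D=\tilde D'\Big[\tfrac1K\sum_{j}\tilde\xi_j^{\otimes2}\Big]^{-1}\tilde D=F_\infty ,
\]
the factors $B$ cancelling because $B$ is symmetric and invertible. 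This is exactly (\ref{F_limit}), and it is the step I expect to be the main conceptual point: the rotation is what makes the limit pivotal, i.e.\ free of $\mathcal R$, $Q$, and $\Lambda$.

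Finally, the $t$ statistic follows from the same computation with $p=1$, where $A$ is a row vector and $B=\sqrt{AA'}>0$ a positive scalar. With $M=\frac1K\sum_j\xi_j^{\otimes2}$ one gets $t_T\to^{d}A D/(AMA')^{1/2}$, and since $A D\stackrel{d}{=}B\tilde D$ and $AMA'\stackrel{d}{=}B^2\,\frac1K\sum_j\tilde\xi_j^{2}$, this equals $B\tilde D/\big(B(\tfrac1K\sum_j\tilde\xi_j^2)^{1/2}\big)=\tilde D/(\tfrac1K\sum_j\tilde\xi_j^2)^{1/2}=t_\infty$. Because $B>0$, the sign of the ratio is preserved, which is precisely what makes $t_\infty$ the correct reference distribution for the one-sided test.
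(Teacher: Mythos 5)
Your proposal is correct and follows essentially the same route as the paper's own proof: read the limits of $R\hat{Q}^{-1}$, $\sqrt{T}R\hat{\beta}$, and the score terms off Lemma \ref{Lemma_1}, combine them via the continuous mapping theorem, and then remove the nuisance matrix $\mathcal{R}Q^{-1}\Lambda$ through the distributional identity $\mathcal{R}Q^{-1}\Lambda W_m \stackrel{d}{=} A_pW_p$ (your $B=(AA')^{1/2}$ is just a symmetric choice of the paper's $A_p$; the cancellation works for any invertible factor). Your explicit flagging of joint convergence and of the a.s.\ nonsingularity of the limiting weighting matrix are points the paper leaves implicit, but they do not change the argument.
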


Like the finite sample distributions, the limiting distributions of $F_{T}$
and $t_{T}$ depend on $\lambda$ and the number and form of the basis
functions. This is an attractive feature of the fixed-smoothing
approximations, as they capture the effects of all these factors. More
importantly, the fixed-smoothing approximations capture the randomness of the
HAR variance estimator, which clearly affects the finite sample distributions
of $F_{T}$ and $t_{T}.$ This is why the fixed-smoothing asymptotic
approximations are more accurate than the chi-square or normal approximations.

\section{Asymptotic F and t Theory \label{Sec: F and t}}

The limiting distributions $F_{\infty}$ and $t_{\infty}$ in Theorem
\ref{Theorem_main1} are pivotal but nonstandard. We can approximate the
nonstandard distributions using a chi-square or t distribution. We can also
design a new set of basis functions so that $F_{\infty}$ and $t_{\infty}$
become the standard F and t distributions after some multiplicative adjustment.

\subsection{Chi-square and normal approximations}

Define
\[
\tilde{\phi}_{0}\left(  r;\lambda\right)  =\frac{1}{\lambda}1\left\{
r\leq\lambda\right\}  -\frac{1}{1-\lambda}1\left\{  r>\lambda\right\}  .
\]
Then
\begin{align*}
&  \frac{1}{\lambda}\int_{0}^{\lambda}dW_{p}\left(  \lambda\right)  -\frac
{1}{1-\lambda}\int_{\lambda}^{1}dW_{p}\left(  \lambda\right) \\
&  =\int_{0}^{1}\tilde{\phi}_{0}\left(  r;\lambda\right)  dW_{p}\left(
r\right)  \thicksim N\left(  0,\int_{0}^{1}\left[  \tilde{\phi}_{0}\left(
r;\lambda\right)  \right]  ^{2}dr\cdot I_{p}\right) \\
&  =N\left(  0,\frac{1}{\lambda\left(  1-\lambda\right)  }I_{p}\right)  ,
\end{align*}
and so%
\begin{align*}
\eta_{0}  &  :=\sqrt{\lambda\left(  1-\lambda\right)  }\left[  \frac
{1}{\lambda}\int_{0}^{\lambda}dW_{p}\left(  \lambda\right)  -\frac
{1}{1-\lambda}\int_{\lambda}^{1}dW_{p}\left(  \lambda\right)  \right] \\
&  =\sqrt{\lambda\left(  1-\lambda\right)  }\int_{0}^{1}\tilde{\phi}%
_{0}\left(  r;\lambda\right)  dW_{p}\left(  r\right)  \thicksim N\left(
0,I_{p}\right)  .
\end{align*}
As a result,
\[
\lambda\left(  1-\lambda\right)  F_{T}\rightarrow^{d}\eta_{0}^{\prime}\left(
\frac{1}{K}\sum_{j=1}^{K}\eta_{j}\eta_{j}^{\prime}\right)  ^{-1}\eta_{0}\text{
and }\sqrt{\lambda\left(  1-\lambda\right)  }t_{T}\rightarrow^{d}\eta
_{0}^{\prime}\left(  \frac{1}{K}\sum_{j=1}^{K}\eta_{j}\eta_{j}^{\prime
}\right)  ^{-1/2},
\]
where
\[
\eta_{j}=\int_{0}^{1}\tilde{\phi}_{j}\left(  r;\lambda\right)  dW_{p}\left(
r\right)  \text{ for }j=1,\ldots,K.
\]

When $K$ is relatively large, it is reasonable to approximate $K^{-1}%
\sum_{j=1}^{K}\eta_{j}\eta_{j}^{\prime}$ by its mean:
\[
E\left[  \frac{1}{K}\sum_{j=1}^{K}\eta_{j}\eta_{j}^{\prime}\right]
=I_{p}\cdot\frac{1}{K}\sum_{j=1}^{K}\int_{0}^{1}\left[  \tilde{\phi}%
_{j}\left(  r;\lambda\right)  \right]  ^{2}dr.
\]
With such an approximation, we have%
\begin{align*}
\lambda\left(  1-\lambda\right)  \cdot\frac{1}{K}\sum_{j=1}^{K}\int_{0}%
^{1}\left[  \tilde{\phi}_{j}\left(  r;\lambda\right)  \right]  ^{2}dr\cdot
F_{\infty}  &  \thicksim^{a}\chi_{p}^{2},\\
\sqrt{\lambda\left(  1-\lambda\right)  }\left[  \frac{1}{K}\sum_{j=1}^{K}%
\int_{0}^{1}\left[  \tilde{\phi}_{j}\left(  r;\lambda\right)  \right]
^{2}dr\right]  ^{1/2}\cdot t_{\infty}  &  \thicksim^{a}N(0,1),
\end{align*}
where `$\thicksim^{a}$' signifies distributional approximations. As a result,
we can employ the following approximations:
\begin{align}
F_{T}^{\ast}  &  :=\lambda\left(  1-\lambda\right)  \cdot\left\{  \frac{1}%
{KT}\sum_{j=1}^{K}\sum_{i=1}^{T}\left[  \tilde{\phi}_{j,T}\left(  \frac{i}%
{T};\lambda\right)  \right]  ^{2}\right\}  \cdot F_{T}\thicksim^{a}\chi
_{p}^{2},\label{F_star}\\
t_{T}^{\ast}  &  :=\sqrt{\lambda\left(  1-\lambda\right)  }\cdot\left\{
\frac{1}{KT}\sum_{j=1}^{K}\sum_{i=1}^{T}\left[  \tilde{\phi}_{j,T}\left(
\frac{i}{T};\lambda\right)  \right]  ^{2}\right\}  ^{1/2}\cdot t_{T}%
\thicksim^{a}N(0,1), \label{t_star}%
\end{align}
where $\tilde{\phi}_{j,T}\left(  r;\lambda\right)  $ is the finite sample
version of $\tilde{\phi}_{j}\left(  r;\lambda\right)  $ given by
\begin{align}
\tilde{\phi}_{j,T}\left(  r;\lambda\right)   &  =\frac{1}{\lambda}\left[
\phi_{j}\left(  r\right)  -\frac{1}{\left[  \lambda T\right]  }\sum
_{t=1}^{\left[  \lambda T\right]  }\phi_{j}\left(  \frac{t}{T}\right)
\right]  \cdot1\left\{  r\leq\lambda\right\} \nonumber\\
&  -\frac{1}{1-\lambda}\left[  \phi_{j}\left(  r\right)  -\frac{1}{T-\left[
\lambda T\right]  }\sum_{t=\left[  \lambda T\right]  +1}^{T}\phi_{j}\left(
\frac{t}{T}\right)  \right]  \cdot1\left\{  r>\lambda\right\}  .
\end{align}

It is important to point out that the chi-square and normal approximations are
not based on the original Wald and t statistics but rather on their modified
versions $F_{T}^{\ast}$ and $t_{T}^{\ast}$. To a great extent, the chi-square
and normal approximations we propose here improve upon the conventional
chi-square and normal approximations that are applied directly to the original
Wald and t statistics.

Note that the chi-square distribution and standard normal distribution in
(\ref{F_star}) and (\ref{t_star}) are not the asymptotic distributions of
$F_{T}^{\ast}$ and $t_{T}^{\ast}$ for a fixed $K.$ The fixed-$K$ asymptotic
distributions are given by
\begin{align}
F_{T}^{\ast}  &  \rightarrow^{d}\lambda\left(  1-\lambda\right)  \cdot\frac
{1}{K}\sum_{j=1}^{K}\int_{0}^{1}\left[  \tilde{\phi}_{j}\left(  r;\lambda
\right)  \right]  ^{2}dr\cdot F_{\infty}:=F_{\infty}^{\ast}%
,\label{F_star_nonstandard}\\
t_{T}^{\ast}  &  \rightarrow^{d}\sqrt{\lambda\left(  1-\lambda\right)  }%
\cdot\left\{  \frac{1}{K}\sum_{j=1}^{K}\int_{0}^{1}\left[  \tilde{\phi}%
_{j}\left(  r;\lambda\right)  \right]  \right\}  ^{1/2}\cdot t_{\infty
}:=t_{\infty}^{\ast}. \label{t_star_nonstandard}%
\end{align}
These follow directly from Theorem \ref{Theorem_main1}. The chi-square
distribution and standard normal distribution are only approximations to the
above nonstandard fixed-K asymptotic distributions.

\subsection{Asymptotic F and t Theory}

To obtain convenient fixed-K asymptotic approximations, we note that for each
$j=0,1,\ldots,K,$ $\eta_{j}$ is normal. For each $j\neq0,$ we have
\begin{align*}
&  cov\left(  \eta_{0},\eta_{j}\right) \\
&  =\int_{0}^{1}\tilde{\phi}_{0}\left(  r;\lambda\right)  \tilde{\phi}%
_{j}\left(  r;\lambda\right)  dr=\frac{1}{\lambda^{2}}\int_{0}^{1}\left[
\phi_{j}\left(  r\right)  -\bar{\phi}_{j,1}\right]  1\left\{  r\leq
\lambda\right\}  dr\\
&  +\frac{1}{\left(  1-\lambda\right)  ^{2}}\int_{0}^{1}\left[  \phi
_{j}\left(  r\right)  -\bar{\phi}_{j,2}\right]  1\left\{  r>\lambda\right\}
dr=0.
\end{align*}
So $\eta_{0}$ is independent of $\eta_{j}$, $j=1,\ldots,K.$ In addition,
\[
cov\left(  \eta_{j_{1}},\eta_{j_{2}}\right)  =\int_{0}^{1}\tilde{\phi}_{j_{1}%
}\left(  r;\lambda\right)  \tilde{\phi}_{j_{2}}\left(  r;\lambda\right)  dr.
\]
Therefore, if $\{\tilde{\phi}_{j}\left(  r;\lambda\right)  \}$ are
orthonormal, then $\eta_{j}$ for $j=0,1,\ldots,K$ are independent standard
normals. In this case, $\lambda\left(  1-\lambda\right)  F_{\infty}$ is a
quadratic form in a standard normal vector with an independent weighting
matrix. After some adjustment, we can show that $\lambda\left(  1-\lambda
\right)  F_{\infty}$ is equal to a standard F distribution and that $F_{T}$
converges to the F distribution. Similarly, $\sqrt{\lambda\left(
1-\lambda\right)  }\cdot t_{T}$ converges to Student's t distribution.

\begin{proposition}
\label{Prop_F}Let Assumptions \ref{Assumption XX}--\ref{Weight_function} hold.
If $\{\tilde{\phi}_{j}\left(  r;\lambda\right)  \}$ are orthonormal, then
\[
\tilde{F}_{T}^{\ast}:=\frac{K-p+1}{Kp}\cdot\lambda\left(  1-\lambda\right)
\cdot F_{T}\rightarrow^{d}F_{p,K-p+1},
\]
and%
\[
\tilde{t}_{T}^{\ast}:=\sqrt{\lambda\left(  1-\lambda\right)  }\cdot
t_{T}\rightarrow^{d}t_{K}%
\]
where $F_{p,K-p+1}$ is the standard F distribution with the degrees of freedom
$\left(  p,K-p+1\right)  $ and $t_{K}$ is Student's t distribution with
degrees of freedom $K.$
\end{proposition}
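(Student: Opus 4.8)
The plan is to build on Theorem~\ref{Theorem_main1} together with the covariance computations already carried out just before the statement. By assumption $\{\tilde{\phi}_{j}(r;\lambda)\}_{j=1}^{K}$ are orthonormal, and the displayed calculation shows $\mathrm{cov}(\eta_{0},\eta_{j})=0$ for $j\geq 1$. Since $(\eta_{0},\eta_{1},\ldots,\eta_{K})$ are jointly Gaussian (each $\eta_{j}=\int_{0}^{1}\tilde{\phi}_{j}(r;\lambda)\,dW_{p}(r)$ is a Wiener integral of a deterministic integrand against the same $p$-dimensional Brownian motion), zero covariance upgrades to full independence. Hence $\eta_{0},\eta_{1},\ldots,\eta_{K}$ are \emph{independent} $N(0,I_{p})$ vectors. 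First I would record this independence explicitly, as it is the structural fact that makes the whole argument go through.

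Next I would rewrite the limit $F_{\infty}$ from \eqref{F_limit} in terms of these standardized vectors. Using $\eta_{0}=\sqrt{\lambda(1-\lambda)}\bigl[\tfrac{1}{\lambda}\int_{0}^{\lambda}dW_{p}-\tfrac{1}{1-\lambda}\int_{\lambda}^{1}dW_{p}\bigr]/\sqrt{\lambda(1-\lambda)}$ — more precisely the relation $\eta_{0}=\sqrt{\lambda(1-\lambda)}\cdot[\text{mean-difference vector}]$ established in the chi-square subsection — one gets
\[
\lambda(1-\lambda)\,F_{\infty}=\eta_{0}'\left(\frac{1}{K}\sum_{j=1}^{K}\eta_{j}\eta_{j}'\right)^{-1}\eta_{0}.
\]
So it remains to show that this quadratic form, after the factor $(K-p+1)/(Kp)$, is $F_{p,K-p+1}$. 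I would set $S=\sum_{j=1}^{K}\eta_{j}\eta_{j}'$, which is a $p\times p$ Wishart matrix $W_{p}(K,I_{p})$ built from the i.i.d.\ standard normal columns $\eta_{1},\ldots,\eta_{K}$, and note that $S$ is independent of $\eta_{0}$. Then
\[
\frac{K-p+1}{Kp}\,\lambda(1-\lambda)\,F_{\infty}=\frac{K-p+1}{p}\,\eta_{0}'S^{-1}\eta_{0}.
\]

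The core step is the distributional identity $\frac{K-p+1}{p}\,\eta_{0}'S^{-1}\eta_{0}\sim F_{p,K-p+1}$ for $\eta_{0}\sim N(0,I_{p})$ independent of $S\sim W_{p}(K,I_{p})$. This is the classical Hotelling $T^{2}$ result: $\eta_{0}'S^{-1}\eta_{0}$ is a quadratic form of a standard normal in the inverse of an independent Wishart, and Hotelling's theorem states that $\frac{K-p+1}{p}\,\eta_{0}'S^{-1}\eta_{0}\sim F_{p,K-p+1}$ whenever the Wishart has $K\geq p$ degrees of freedom with identity scale (see any standard multivariate text, e.g.\ the connection between Hotelling's $T^{2}$ and the F distribution). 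Invoking this, and combining with the weak convergence $F_{T}\to^{d}F_{\infty}$ from Theorem~\ref{Theorem_main1} via the continuous mapping theorem applied to the map $F_{\infty}\mapsto \frac{K-p+1}{Kp}\lambda(1-\lambda)F_{\infty}$, gives $\tilde{F}_{T}^{\ast}\to^{d}F_{p,K-p+1}$.

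For the t statement with $p=1$, everything collapses to scalars: $\eta_{0},\eta_{1},\ldots,\eta_{K}$ are independent $N(0,1)$, $\sqrt{\lambda(1-\lambda)}\,t_{\infty}=\eta_{0}/\sqrt{K^{-1}\sum_{j=1}^{K}\eta_{j}^{2}}$, and $\sum_{j=1}^{K}\eta_{j}^{2}\sim\chi_{K}^{2}$ independent of $\eta_{0}$, so the ratio is exactly $t_{K}$ by the definition of Student's t; the continuous mapping theorem then delivers $\tilde{t}_{T}^{\ast}\to^{d}t_{K}$. \textbf{The main obstacle} I anticipate is not any single computation but the clean verification that the orthonormality hypothesis genuinely makes $\{\eta_{j}\}_{j=1}^{K}$ jointly i.i.d.\ standard normal (so that $S$ is \emph{exactly} $W_{p}(K,I_{p})$ and not some more general Wishart), together with confirming the Gaussian-independence-from-zero-covariance step is legitimate here—this relies on joint, not merely marginal, normality of the Wiener integrals, which holds because they are all linear functionals of a single Gaussian process. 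Once that is secured, the rest is an appeal to the standard Hotelling/Wishart machinery.
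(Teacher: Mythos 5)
Your proposal is correct and follows essentially the same route as the paper's own proof: establish that orthonormality plus the zero-covariance computation makes $\eta_{0},\eta_{1},\ldots,\eta_{K}$ independent $N(0,I_{p})$ vectors, recognize $\sum_{j=1}^{K}\eta_{j}\eta_{j}^{\prime}$ as a standard Wishart $\mathbb{W}_{p}(I_{p},K)$ independent of $\eta_{0}$, invoke the Hotelling $T^{2}$--to--$F$ relationship, and pass from $F_{\infty}$ to $F_{T}$ by continuous mapping. The only differences are cosmetic: you spell out the joint-Gaussianity justification for independence and the scalar $\chi_{K}^{2}$ argument for the t statistic, both of which the paper asserts or omits as ``similar.''
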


This is a very convenient result, as the fixed-smoothing asymptotic
approximations are standard distributions and there is no need to simulate
critical values.

When $\{\tilde{\phi}_{j}\left(  r;\lambda\right)  \}$ are orthonormal, we have
$K^{-1}\sum_{j=1}^{K}\int_{0}^{1}\left[  \tilde{\phi}_{j}\left(
r;\lambda\right)  \right]  ^{2}dr=1.$ In view of this, we can see that the
definitions of $\tilde{F}_{T}^{\ast}$ and $\tilde{t}_{T}^{\ast}$ are similar
to those of $F_{T}^{\ast}$ and $t_{T}^{\ast}$ given in
(\ref{F_star_nonstandard}) and (\ref{t_star_nonstandard}). The only difference
is that there is an additional degrees-of-freedom-adjustment factor in
$\tilde{F}_{T}^{\ast}$ when $p>1.$

\subsection{Designing the bases\label{bases_design}}

To design the basis functions such that $\{\tilde{\phi}_{j}\left(
r;\lambda\right)  \}$ are orthonormal, we need the following lemma.

\begin{lemma}
\label{Lemma: covariance}Let $\delta\left(  \cdot\right)  $ be the Dirac delta
function such that
\[
\int_{0}^{1}\int_{0}^{1}\phi_{j_{1}}\left(  r\right)  \delta\left(
r-s\right)  \phi_{j_{2}}\left(  s\right)  drds=\int_{0}^{1}\phi_{j_{1}}\left(
r\right)  \phi_{j_{2}}\left(  r\right)  dr.
\]
Then%
\[
\int_{0}^{1}\tilde{\phi}_{j_{1}}\left(  r;\lambda\right)  \tilde{\phi}_{j_{2}%
}\left(  r;\lambda\right)  dr=\int_{0}^{1}\int_{0}^{1}C(r,s;\lambda
)\phi_{j_{1}}\left(  r\right)  \phi_{j_{2}}\left(  s\right)  drds,
\]
where%
\begin{align*}
C(r,s;\lambda)  &  =\left[  \delta(r-s)-\frac{1}{\lambda}\right]
\frac{1\left\{  \left(  r,s\right)  \in\lbrack0,\lambda]\times\lbrack
0,\lambda]\right\}  }{\lambda^{2}}\\
&  +\left[  \delta(r-s)-\frac{1}{1-\lambda}\right]  \frac{1\left\{  \left(
r,s\right)  \in\lbrack\lambda,1]\times\lbrack\lambda,1]\right\}  }{\left(
1-\lambda\right)  ^{2}}.
\end{align*}

\end{lemma}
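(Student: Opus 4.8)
The plan is to reduce the claimed identity to an elementary computation by exploiting the disjoint supports of the two indicator functions appearing in $\tilde{\phi}_{j}$. First I would note that since $1\{r\leq\lambda\}$ and $1\{r>\lambda\}$ have disjoint supports, the product $\tilde{\phi}_{j_{1}}(r;\lambda)\tilde{\phi}_{j_{2}}(r;\lambda)$ contains no cross term, so that
\[
\int_{0}^{1}\tilde{\phi}_{j_{1}}\tilde{\phi}_{j_{2}}\,dr=\frac{1}{\lambda^{2}}\int_{0}^{\lambda}[\phi_{j_{1}}(r)-\bar{\phi}_{j_{1},1}][\phi_{j_{2}}(r)-\bar{\phi}_{j_{2},1}]\,dr+\frac{1}{(1-\lambda)^{2}}\int_{\lambda}^{1}[\phi_{j_{1}}(r)-\bar{\phi}_{j_{1},2}][\phi_{j_{2}}(r)-\bar{\phi}_{j_{2},2}]\,dr.
\]
This splits the problem into two structurally identical pieces, one on $[0,\lambda]$ and one on $[\lambda,1]$, which I would treat in parallel.

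Second, I would expand the first demeaned product and use $\int_{0}^{\lambda}\phi_{j}(r)\,dr=\lambda\bar{\phi}_{j,1}$ to cancel the cross terms against the constant term. The three terms collapse to one, leaving
\[
\int_{0}^{\lambda}[\phi_{j_{1}}(r)-\bar{\phi}_{j_{1},1}][\phi_{j_{2}}(r)-\bar{\phi}_{j_{2},1}]\,dr=\int_{0}^{\lambda}\phi_{j_{1}}(r)\phi_{j_{2}}(r)\,dr-\lambda\,\bar{\phi}_{j_{1},1}\bar{\phi}_{j_{2},1}.
\]

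Third, I would recast both surviving terms as double integrals over $[0,\lambda]^{2}$. Using the defining property of the Dirac delta stated in the lemma, $\int_{0}^{\lambda}\phi_{j_{1}}\phi_{j_{2}}\,dr=\int_{0}^{\lambda}\int_{0}^{\lambda}\delta(r-s)\phi_{j_{1}}(r)\phi_{j_{2}}(s)\,dr\,ds$, while by definition $\lambda\bar{\phi}_{j_{1},1}\bar{\phi}_{j_{2},1}=\frac{1}{\lambda}\int_{0}^{\lambda}\int_{0}^{\lambda}\phi_{j_{1}}(r)\phi_{j_{2}}(s)\,dr\,ds$. Dividing by $\lambda^{2}$ then produces exactly the first term of $C(r,s;\lambda)$, where the indicator $1\{(r,s)\in[0,\lambda]\times[0,\lambda]\}$ lets me extend the domain of integration to $[0,1]^{2}$. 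An identical computation on $[\lambda,1]$, with $\lambda$ replaced by $1-\lambda$ and $\bar{\phi}_{j,1}$ by $\bar{\phi}_{j,2}$, yields the second term of $C$. Summing the two pieces gives the claimed representation.

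As for obstacles, there is no genuine analytic difficulty here; the argument is essentially bookkeeping. The only point requiring care is that $\delta$ must be handled purely formally, through the defining identity supplied in the statement, so that every manipulation is an ordinary Fubini rearrangement of convergent integrals and no distributional subtleties arise. I would also verify explicitly that the cross terms in the second step cancel \emph{exactly} rather than merely combine, since it is precisely this cancellation that makes the off-diagonal part of the kernel the constants $-1/\lambda$ and $-1/(1-\lambda)$ rather than something more elaborate.
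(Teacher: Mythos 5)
Your proposal is correct and follows essentially the same route as the paper's proof: split the integral over the disjoint supports $[0,\lambda]$ and $[\lambda,1]$, expand each demeaned product so the cross terms cancel, rewrite the surviving terms as double integrals using the formal delta identity and the definition of $\bar{\phi}_{j,1}$, $\bar{\phi}_{j,2}$, and extend to $[0,1]^{2}$ via the indicators. No gaps; the only cosmetic difference is that the paper passes directly from the demeaned product to the double-integral form rather than recording the intermediate single-integral identity you state in your second step.
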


Let%
\begin{align*}
W_{p}\left(  r;\lambda\right)   &  =\frac{1}{\lambda}\left[  W_{p}\left(
r\right)  -\frac{r}{\lambda}W_{p}\left(  \lambda\right)  \right]
\cdot1\left\{  0\leq r\leq\lambda\right\} \\
&  -\frac{1}{1-\lambda}\left\{  W_{p}\left(  r\right)  -W_{p}\left(
\lambda\right)  -\frac{r-\lambda}{1-\lambda}\left[  W_{p}\left(  1\right)
-W_{p}\left(  \lambda\right)  \right]  \right\}  \cdot1\left\{  \lambda
<r\leq1\right\}
\end{align*}
be the transformed Brownian motion. Then we have%
\[
\int_{0}^{1}\tilde{\phi}_{j_{1}}\left(  r;\lambda\right)  dW_{p}\left(
r\right)  =\int_{0}^{1}\phi_{j}\left(  r\right)  dW_{p}(r;\lambda),
\]
and
\[
E\left[  dW_{p}\left(  r;\lambda\right)  dW_{p}^{\prime}\left(  s;\lambda
\right)  \right]  =I_{p}\cdot C(r,s;\lambda)drds.
\]
Therefore, $C(r,s;\lambda)$ can be regarded as the covariance kernel function
for the transformed Brownian motion.

To design the basis functions $\left\{  \phi_{j}\left(  r\right)  \right\}  $
such that $\{\tilde{\phi}_{j}\left(  r;\lambda\right)  \}$ are orthonormal on
$L^{2}[0,1],$ we require that $\left\{  \phi_{j}\left(  r\right)  \right\}  $
be orthonormal with respect to the covariance kernel function $C(r,s;\lambda
),$ that is,
\begin{equation}
\int_{0}^{1}\int_{0}^{1}C(r,s;\lambda)\phi_{j_{1}}\left(  r\right)
\phi_{j_{2}}\left(  s\right)  drds=1\left\{  j_{1}=j_{2}\right\}  . \label{OC}%
\end{equation}
This can be achieved by applying the Gram--Schmidt orthonormalization to any
set of basis functions on $L^{2}[0,1]$. The chart below illustrates the procedure:

\begin{center}%
\begin{tabular}
[c]{|lll|}\hline
$\left\{  \phi_{j}\right\}  $ & $\underset{\text{{\footnotesize estimation
error}}}{\rightarrow}$ & $\left\{  \tilde{\phi}_{j}\right\}  $ $(\text{may not
be orthonormal on }L^{2}\left[  0,1\right]  )$\\
$\ $\ $\downarrow${\tiny GS} &  & \\
$\{\phi_{j}^{\ast}\}$ & $\underset{\text{{\footnotesize estimation error}%
}}{\rightarrow}$ & $\left\{  \tilde{\phi}_{j}^{\ast}\right\}  $
$(\text{orthonormal on }L^{2}\left[  0,1\right]  )$\\\hline
\end{tabular}

\end{center}

In the above, $\left\{  \phi_{j}\right\}  $ is the initial set of basis
functions, and $\{\phi_{j}^{\ast}\}$ is the Gram-Schmidt orthonormalized set.
\textquotedblleft$\phi_{j}\rightarrow\tilde{\phi}_{j}$\textquotedblright\ and
\textquotedblleft$\phi_{j}^{\ast}\rightarrow\tilde{\phi}_{j}^{\ast}%
$\textquotedblright\ reflect the effect of the estimation error in estimating
$\beta:$ had we known $\beta,$ we would have used the true $u_{t}$ instead of
$\hat{u}_{t}$ in constructing the variance estimator, and the key elements of
the weighting matrix in (\ref{F_limit}) in Theorem \ref{Theorem_main1} would
have been $\int_{0}^{1}\phi_{j}\left(  r\right)  dW_{p}\left(  r\right)  $
instead of $\int_{0}^{1}\tilde{\phi}_{j}\left(  r;\lambda\right)
dW_{p}\left(  r\right)  .$ The Gram-Schmidt orthonormalization ensures that
$\{\phi_{j}^{\ast}\}$ are orthonormal with respect to the covariance kernel
$C(r,s;\lambda):\int_{0}^{1}\int_{0}^{1}\phi_{j_{1}}^{\ast}\left(  r\right)
\phi_{j_{2}}^{\ast}\left(  s\right)  C(r,s;\lambda)drds=1\left\{  j_{1}%
=j_{2}\right\}  .$ In view of%
\[
\int_{0}^{1}\int_{0}^{1}\phi_{j_{1}}^{\ast}\left(  r\right)  \phi_{j_{2}%
}^{\ast}\left(  s\right)  C(r,s;\lambda)drds=\int_{0}^{1}\tilde{\phi}_{j_{1}%
}^{\ast}\left(  r\right)  \tilde{\phi}_{j_{2}}^{\ast}\left(  r\right)  dr,
\]
we have: $\{\tilde{\phi}_{j}^{\ast}\}$ are orthonormal on $L^{2}\left[
0,1\right]  .$

If we use $\{\phi_{j}^{\ast}\}$ in constructing the variance estimator, then
\[
\lambda\left(  1-\lambda\right)  F_{T}\rightarrow^{d}\eta_{0}^{\prime}\left(
\frac{1}{K}\sum_{j=1}^{K}\eta_{j}\eta_{j}^{\prime}\right)  ^{-1}\eta_{0}%
\]
for $\eta_{j}=\int_{0}^{1}\tilde{\phi}_{j}^{\ast}\left(  r;\lambda\right)
dW_{p}\left(  r\right)  \thicksim iidN(0,I_{p})$ because $\{\tilde{\phi}%
_{j}^{\ast}\}$ are orthonormal on $L^{2}\left[  0,1\right]  .$ Moreover, for
$j=1,\ldots,K,$ $\eta_{j}$ is independent of $\eta_{0}.$ Therefore, the
asymptotic F theory in Proposition \ref{Prop_F} holds. Similarly, the
asymptotic t theory holds.

Instead of searching for the basis functions that satisfy (\ref{OC}), we
search for their discrete versions: the basis vectors. For each basis function
$\phi_{k}\left(  r\right)  ,$ the corresponding basis vector is defined as
\[
\boldsymbol{\phi}_{k}=\left(  \phi_{k}\left(  \frac{1}{T}\right)  ,\phi
_{k}\left(  \frac{2}{T}\right)  ,\ldots,\phi_{k}\left(  \frac{T}{T}\right)
\right)  ^{\prime}.
\]
Let $\mathbf{C}_{T}:=\mathbf{C}_{T}\left(  \lambda\right)  $ be the $T\times
T$ matrix whose $\left(  i,j\right)  $-th element is equal to
\begin{align*}
&  C_{T}\left(  i,j;\lambda\right)  =\left[  T\cdot1\left\{  i=j\right\}
-\frac{1}{\lambda}\right]  \frac{1\left\{  \left(  \frac{i}{T},\frac{j}%
{T}\right)  \in\lbrack0,\lambda]\times\lbrack0,\lambda]\right\}  }{\lambda
^{2}}\\
&  +\left[  T\cdot1\left\{  i=j\right\}  -\frac{1}{1-\lambda}\right]
\frac{1\left\{  \left(  \frac{i}{T},\frac{j}{T}\right)  \in(\lambda
,1]\times(\lambda,1]\right\}  }{\left(  1-\lambda\right)  ^{2}}.
\end{align*}
By definition, $\mathbf{C}_{T}$ is symmetric and positive-definite. It is the
discrete version of $C(r,s;\lambda).$ For any two vectors $r_{1},r_{2}%
\in\mathbb{R}^{T}$, we define the inner product%
\begin{equation}
\left\langle r_{1},r_{2}\right\rangle =r_{1}^{\prime}\mathbf{C}_{T}r_{2}%
/T^{2}. \label{inner product}%
\end{equation}
Then the discrete analogue of (\ref{OC}) is%
\begin{equation}
\left\langle \boldsymbol{\phi}_{j_{1}},\boldsymbol{\phi}_{j_{2}}\right\rangle
=1\left\{  j_{1}=j_{2}\right\}  \text{ for }j_{1},j_{2}=1,\ldots,K.
\label{OC3}%
\end{equation}

Given any basis vectors $\boldsymbol{\phi}_{1},\ldots,\boldsymbol{\phi}_{K},$
we now apply the Gram--Schmidt orthonormalization via the Cholesky
decomposition. Let $\boldsymbol{\phi}=(\boldsymbol{\phi}_{1},\mathbf{\ldots
},\boldsymbol{\phi}_{K})$ be the $T\times K$ matrix of basis vectors. Let
$U_{T}\in\mathbb{R}^{K\times K}$\ be the upper triangular factor in the
Cholesky decomposition of $\boldsymbol{\phi}^{\prime}\mathbf{C}_{T}%
\boldsymbol{\phi}/T^{2}$ such that $\boldsymbol{\phi}^{\prime}\mathbf{C}%
_{T}\boldsymbol{\phi}/T^{2}=U_{T}^{\prime}U_{T}.$ Define
\[
\boldsymbol{\phi}^{\ast}=\boldsymbol{\phi}U_{T}^{-1}:=(\boldsymbol{\phi}%
_{1}^{\ast},\mathbf{\ldots},\boldsymbol{\phi}_{K}^{\ast}).
\]
We then have
\[
(\boldsymbol{\phi}^{\ast})^{\prime}\mathbf{C}_{T}\boldsymbol{\phi}^{\ast
}/T^{2}=\left(  U_{T}^{\prime}\right)  ^{-1}\boldsymbol{\phi}^{\prime
}\mathbf{C}_{T}\boldsymbol{\phi}U_{T}^{-1}/T^{2}=\left(  U_{T}^{\prime
}\right)  ^{-1}U_{T}^{\prime}U_{T}U_{T}^{-1}=I_{K}.
\]
That is, the columns of the matrix $\boldsymbol{\phi}^{\ast}$ satisfy the
conditions in (\ref{OC3}).

Note that the $\left(  k_{1},k_{2}\right)  $-th element of $\boldsymbol{\phi
}^{\prime}\mathbf{C}_{T}\boldsymbol{\phi}/T^{2}$ satisfies
\begin{align*}
&  \frac{1}{T^{2}}\sum_{j=1}^{T}\sum_{i=1}^{T}\phi_{k_{1}}\left(  \frac{i}%
{T}\right)  C_{T}\left(  i,j;\lambda\right)  \phi_{k_{2}}\left(  \frac{j}%
{T}\right) \\
&  \rightarrow\int_{0}^{1}\int_{0}^{1}C\left(  r,s;\lambda\right)  \phi
_{k_{1}}\left(  r\right)  \phi_{k_{2}}\left(  s\right)  drds=\int_{0}%
^{1}\tilde{\phi}_{k_{1}}\left(  r;\lambda\right)  \tilde{\phi}_{k_{2}}\left(
r;\lambda\right)  dr\\
&  =cov(\eta_{k_{1}},\eta_{k_{2}})\text{ as }T\rightarrow\infty.
\end{align*}
This implies that $U_{T}$ converges to the upper triangular factor of the
Cholesky decomposition of \textrm{var}$(\eta_{1},\ldots,\eta_{K}).$ As a
result, every transformed basis vector is approximately equal to a linear
combination of the original basis vectors. The implied basis functions are
thus equal to linear combinations of the original basis functions. Therefore,
if Assumption \ref{Weight_function} holds for the original basis functions, it
also holds for the transformed basis functions. It then follows that
Proposition \ref{Prop_F} holds when $\left\{  \boldsymbol{\phi}_{1}^{\ast
},\mathbf{\ldots},\boldsymbol{\phi}_{K}^{\ast}\right\}  $ are used as the
basis vectors in constructing the asymptotic variance estimator. More
specifically, if we estimate $\Omega$ by
\[
\hat{\Omega}=\frac{1}{K}\sum_{j=1}^{K}\left[  \frac{1}{\sqrt{T}}\sum_{t=1}%
^{T}\phi_{j,t}^{\ast}\tilde{X}_{t}^{\prime}\hat{u}_{t}\right]  ^{\otimes2},
\]
where $\phi_{j,t}^{\ast}$ is the $t$-th element of the vector
$\boldsymbol{\phi}_{j}^{\ast},$ then the asymptotic F and t results in
Proposition \ref{Prop_F} hold.

\section{The Chow Test in the presence of time-invariant effects
\label{Sec: Extend}}

Suppose there is another covariate vector $Z_{t}\in\mathbb{R}^{\ell}$ whose
effect on $Y_{t}$ does not change over time so that we have the model:
\[
Y_{t}=X_{t}\cdot1\left\{  t\leq\left[  \lambda T\right]  \right\}  \cdot
\beta_{1}+X_{t}\cdot1\left\{  t>\left[  \lambda T\right]  +1\right\}
\cdot\beta_{2}+Z_{t}\gamma+u_{t}.
\]
Let $Z=(Z_{1}^{\prime},\ldots,Z_{T}^{\prime})^{\prime}$ and $M_{Z}%
=I_{T}-Z(Z^{\prime}Z)^{-1}Z^{\prime}.$ Then%
\[
M_{Z}Y=M_{Z}\tilde{X}\beta+M_{Z}u.
\]
The OLS estimator of $\beta=\left(  \beta_{1}^{\prime},\beta_{2}^{\prime
}\right)  ^{\prime}$ is now%
\[
\hat{\beta}=(\tilde{X}^{\prime}M_{Z}\tilde{X})^{-1}\tilde{X}^{\prime}M_{Z}Y.
\]
Let $\hat{u}=\left(  \hat{u}_{1},\ldots,\hat{u}_{T}\right)  ^{\prime}%
=M_{Z}Y-M_{Z}\tilde{X}\hat{\beta}=M_{Z}u-M_{Z}\tilde{X}(\hat{\beta}-\beta)$
and $\tilde{X}_{z}=(\tilde{X}_{z,1}^{\prime},\ldots,\tilde{X}_{z,T}^{\prime
})^{\prime}=M_{Z}\tilde{X}.$ Define
\[
\hat{Q}_{\tilde{X}\cdot Z}=\frac{\tilde{X}^{\prime}M_{Z}\tilde{X}}{T}\text{
and }\hat{\Omega}=\frac{1}{K}\sum_{j=1}^{K}\left[  \frac{1}{\sqrt{T}}%
\sum_{t=1}^{T}\phi_{j}\left(  \frac{t}{T}\right)  \tilde{X}_{z,t}^{\prime}%
\hat{u}_{t}\right]  ^{\otimes2}.
\]
The Wald statistic for testing $H_{0}:R\beta=0$ against $H_{1}:R\beta\neq0$
takes the same form as before:
\[
F_{T}=T\cdot(R\hat{\beta})^{\prime}\left[  R\hat{Q}^{-1}\hat{\Omega}\hat
{Q}^{-1}R^{\prime}\right]  ^{-1}(R\hat{\beta}).
\]
When $p=1,$ we construct the t statistic:
\[
t_{T}=\frac{\sqrt{T}\cdot R\hat{\beta}}{\left[  R\hat{Q}^{-1}\hat{\Omega}%
\hat{Q}^{-1}R^{\prime}\right]  ^{1/2}}.
\]

To establish the asymptotic distributions of $F_{T}$ and $t_{T},$ we maintain
the two assumptions below, which are analogous to Assumptions
\ref{Assumption XX} and \ref{Assumption XU}.

\begin{assumption}
\label{Assumption WW}$T^{-1}\sum_{t=1}^{[Tr]}\left(  X_{t},Z_{t}\right)
^{\prime}\left(  X_{t},Z_{t}\right)  \rightarrow^{p}Q\cdot r$ uniformly over
$r\in\lbrack0,1]$ for a $\left(  m+\ell\right)  \times\left(  m+\ell\right)  $
invertible matrix $Q$.
\end{assumption}

\begin{assumption}
\label{Assumption WU}$T^{-1/2}\sum_{t=1}^{[Tr]}\left(  X_{t},Z_{t}\right)
^{\prime}u_{t}\rightarrow^{d}\Lambda W_{m+\ell}\left(  r\right)  $ for
$r\in\lbrack0,1]$ where $\Lambda\Lambda^{\prime}$ is the long run variance of
the process $\{\left(  X_{t},Z_{t}\right)  ^{\prime}u_{t}\}$ and $W_{m+\ell
}\left(  \cdot\right)  $ is an $\left(  \ell+m\right)  \times1$ standard
Brownian process.
\end{assumption}

We partition $Q$ and $\Lambda$ according to
\[
Q=\left(
\begin{array}
[c]{cc}%
Q_{XX} & Q_{XZ}\\
Q_{ZX} & Q_{ZZ}%
\end{array}
\right)  \text{ and }\Lambda=\left(
\begin{array}
[c]{c}%
\Lambda_{X}\\
\Lambda_{Z}%
\end{array}
\right)  ,
\]
where $Q_{XX}\in\mathbb{R}^{m\times m},Q_{ZZ}\in\mathbb{R}^{\ell\times\ell},$
$\Lambda_{X}\in\mathbb{R}^{m\times(\ell+m)},$ and $\Lambda_{Z}\in
\mathbb{R}^{\ell\times(\ell+m)}.$

\begin{theorem}
\label{Theorem:General_Case}Let Assumptions \ref{Weight_function},
\ref{Assumption WW}, and \ref{Assumption WU} hold. Then

$\left(  a\right)  $%
\[
R\sqrt{T}(\hat{\beta}-\beta)\rightarrow^{d}\mathcal{R}Q_{XX}^{-1}\Lambda
_{X}\left(  \frac{1}{\lambda}\int_{0}^{\lambda}dW_{m+\ell}\left(
\lambda\right)  -\frac{1}{1-\lambda}\int_{\lambda}^{1}dW_{m+\ell}\left(
\lambda\right)  \right)  .
\]

$\left(  b\right)  $%
\[
R\hat{Q}_{\tilde{X}\cdot Z}^{-1}\frac{1}{\sqrt{T}}\sum_{t=1}^{T}\phi
_{j}\left(  \frac{t}{T}\right)  X_{z,t}^{\prime}\hat{u}_{t}\rightarrow
^{d}\mathcal{R}Q_{XX}^{-1}\Lambda_{X}\int_{0}^{1}\tilde{\phi}_{j}\left(
r;\lambda\right)  dW_{m+\ell}\left(  r\right)
\]
jointly over $j=1,2,...,K.$

$\left(  c\right)  $%
\begin{align}
F_{T}  &  \rightarrow^{d}\left[  \frac{1}{\lambda}\int_{0}^{\lambda}%
dW_{p}\left(  \lambda\right)  -\frac{1}{1-\lambda}\int_{\lambda}^{1}%
dW_{p}\left(  \lambda\right)  \right]  ^{\prime}\times\left[  \frac{1}{K}%
\sum_{j=1}^{K}\left\{  \int_{0}^{1}\tilde{\phi}_{j}\left(  r;\lambda\right)
dW_{p}\left(  r\right)  \right\}  ^{\otimes2}\right]  ^{-1}\nonumber\\
&  \times\left[  \frac{1}{\lambda}\int_{0}^{\lambda}dW_{p}\left(
\lambda\right)  -\frac{1}{1-\lambda}\int_{\lambda}^{1}dW_{p}\left(
\lambda\right)  \right]  .\nonumber
\end{align}

When $p=1,$%
\[
t_{T}\rightarrow^{d}\left[  \frac{1}{\lambda}\int_{0}^{\lambda}dW_{p}\left(
\lambda\right)  -\frac{1}{1-\lambda}\int_{\lambda}^{1}dW_{p}\left(
\lambda\right)  \right]  \times\left[  \frac{1}{K}\sum_{j=1}^{K}\left\{
\int_{0}^{1}\tilde{\phi}_{j}\left(  r;\lambda\right)  dW_{p}\left(  r\right)
\right\}  ^{\otimes2}\right]  ^{-1/2}.
\]

\end{theorem}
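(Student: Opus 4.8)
The plan is to establish parts (a) and (b)---the time-invariant-effect analogues of Lemma \ref{Lemma_1}---and then to read off part (c) by the same route that took Lemma \ref{Lemma_1} to Theorem \ref{Theorem_main1}. The one mechanism that makes everything work is that, although partialling out $Z$ destroys the block-diagonal structure of $\hat{Q}$, the contrast matrix $R=[\mathcal{R},-\mathcal{R}]$ annihilates all the resulting off-diagonal (cross-regime) corrections.

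For part (a) I would first use Assumption \ref{Assumption WW} to get $\hat{Q}_{\tilde{X}\cdot Z}=\tilde{X}^{\prime}M_{Z}\tilde{X}/T\rightarrow^{p}\bar{Q}$, whose blocks are $\lambda Q_{XX}-\lambda^{2}A$, $-\lambda(1-\lambda)A$, and $(1-\lambda)Q_{XX}-(1-\lambda)^{2}A$, with $A=Q_{XZ}Q_{ZZ}^{-1}Q_{ZX}$. The decisive step is the (easily checked, by right-multiplying by $\bar{Q}$) identity
\[
R\bar{Q}^{-1}=\left[  \tfrac{1}{\lambda}\mathcal{R}Q_{XX}^{-1},\ -\tfrac{1}{1-\lambda}\mathcal{R}Q_{XX}^{-1}\right]  ,
\]
in which every $A$-term cancels, so that $R\bar{Q}^{-1}$ is exactly the weight that arises in the no-covariate problem. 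Assumption \ref{Assumption WU} and the continuous mapping theorem then give the joint limit of $T^{-1/2}\tilde{X}^{\prime}M_{Z}u$: an $X$-part with blocks $\Lambda_{X}W_{m+\ell}(\lambda)$ and $\Lambda_{X}[W_{m+\ell}(1)-W_{m+\ell}(\lambda)]$, minus a $Z$-correction whose two blocks carry the factors $\lambda$ and $1-\lambda$ times the common term $Q_{XZ}Q_{ZZ}^{-1}\Lambda_{Z}W_{m+\ell}(1)$. Premultiplying by $R\bar{Q}^{-1}$, the weights $\tfrac{1}{\lambda}$ and $-\tfrac{1}{1-\lambda}$ exactly kill the $Z$-correction, and part (a) follows.

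For part (b) I would exploit the same identity to collapse the pre-factor: since $R\hat{Q}_{\tilde{X}\cdot Z}^{-1}=[\tfrac{1}{\lambda}\mathcal{R}Q_{XX}^{-1},-\tfrac{1}{1-\lambda}\mathcal{R}Q_{XX}^{-1}]+o_{p}(1)$, the $j$-th series term reduces to $\mathcal{R}Q_{XX}^{-1}\,T^{-1/2}h^{\prime}M_{Z}\Phi_{j}\hat{u}+o_{p}(1)$, where $\Phi_{j}=\mathrm{diag}(\phi_{j}(1/T),\ldots,\phi_{j}(T/T))$ and $h=\tfrac{1}{\lambda}X_{1}-\tfrac{1}{1-\lambda}X_{2}$ is the contrast regressor. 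A short calculation shows $Z^{\prime}h/T\rightarrow^{p}0$, i.e. $h$ is asymptotically orthogonal to $Z$, which is the conceptual reason the $Z$-projections nearly disappear. Substituting $\hat{u}=M_{Z}u-M_{Z}\tilde{X}(\hat{\beta}-\beta)$ splits the term into a main piece and an estimation-error piece. The $\Phi_{j}$-weighting reintroduces a non-orthogonality proportional to $(\bar{\phi}_{j,1}-\bar{\phi}_{j,2})Q_{XZ}Q_{ZZ}^{-1}\Lambda_{Z}W_{m+\ell}(1)$, and the crux is that this same $Z$-term appears in both pieces (in the estimation-error piece through the analogous identity $\mathcal{R}Q_{XX}^{-1}\lim(h^{\prime}\Phi_{j}M_{Z}\tilde{X}/T)\bar{Q}^{-1}=\mathcal{R}[\tfrac{\bar{\phi}_{j,1}}{\lambda}Q_{XX}^{-1},-\tfrac{\bar{\phi}_{j,2}}{1-\lambda}Q_{XX}^{-1}]$, whose $A$-terms again cancel) and so drops out of their difference, while the remaining within-regime demeaning turns $\phi_{j}$ into $\tilde{\phi}_{j}(\cdot;\lambda)$. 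Carrying out this joint convergence and the attendant cancellations---in particular justifying the $o_{p}(1)$ negligibility of the left-hand $M_{Z}$ projections and assembling the limits over $j$ simultaneously---is where the bulk of the work lies and is the step I expect to be the main obstacle.

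Finally, part (c) follows from (a) and (b) by the continuous mapping theorem together with the pivotalization used for Theorem \ref{Theorem_main1}. Setting $\Psi=\mathcal{R}Q_{XX}^{-1}\Lambda_{X}$, both the numerator $\xi=\tfrac{1}{\lambda}W_{m+\ell}(\lambda)-\tfrac{1}{1-\lambda}[W_{m+\ell}(1)-W_{m+\ell}(\lambda)]$ and the score vectors $\zeta_{j}=\int_{0}^{1}\tilde{\phi}_{j}(r;\lambda)dW_{m+\ell}(r)$ enter $F_{T}$ only through $\Psi\xi$ and $\{\Psi\zeta_{j}\}$, giving $F_{T}\rightarrow^{d}(\Psi\xi)^{\prime}[\tfrac{1}{K}\sum_{j}(\Psi\zeta_{j})^{\otimes2}]^{-1}(\Psi\xi)$. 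Because $(\xi,\zeta_{1},\ldots,\zeta_{K})$ are jointly Gaussian with all cross-covariances equal to scalar multiples of $I_{m+\ell}$, the family $(\Psi\xi,\Psi\zeta_{1},\ldots)$ has the same law as $\Sigma^{1/2}$ times the corresponding functionals of a standard $p$-dimensional Brownian motion $W_{p}$, where $\Sigma=\Psi\Psi^{\prime}$ (nonsingular, as required for the test to be defined). The factor $\Sigma^{1/2}$ cancels in the self-normalized quadratic form, reducing the limit to the $F_{\infty}$ of Theorem \ref{Theorem_main1}; taking $p=1$ yields $t_{\infty}$ in the same way.
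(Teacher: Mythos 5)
Your proposal is correct, and its skeleton is exactly the paper's: the same split of $\hat{u}$ into $M_{Z}u$ minus $M_{Z}\tilde{X}(\hat{\beta}-\beta)$, the same cancellation of the $Z$-induced term proportional to $(\bar{\phi}_{j,1}-\bar{\phi}_{j,2})\mathcal{R}Q_{XX}^{-1}\Lambda_{XZ}W_{m+\ell}(1)$ between the main and estimation-error pieces, and the same passage to part (c) via the distributional equivalence $\mathcal{R}Q_{XX}^{-1}\Lambda_{X}W_{m+\ell}=^{d}\mathcal{A}_{p}W_{p}$ with $\mathcal{A}_{p}$ cancelling in the self-normalized quadratic form. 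Where you genuinely depart is in how the matrix algebra is executed, and your version is leaner. The paper computes the full inverse $Q_{\tilde{X}\cdot Z}^{-1}$ by the partitioned-inverse formula, obtaining a block-diagonal part plus a correction with identical blocks $\mathbb{Q}$ that $R=[\mathcal{R},-\mathcal{R}]$ annihilates (its (\ref{proof RQi})); you never invert $Q_{\tilde{X}\cdot Z}$ at all, but verify $RQ_{\tilde{X}\cdot Z}^{-1}=[\lambda^{-1}\mathcal{R}Q_{XX}^{-1},\,-(1-\lambda)^{-1}\mathcal{R}Q_{XX}^{-1}]$ by right-multiplying by $Q_{\tilde{X}\cdot Z}$ --- I checked, the $A$-terms cancel exactly as you claim. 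Likewise in part (b): the paper expands the estimation-error piece into $I_{1}+I_{2}+I_{3}$ (its (\ref{I1_I2_I3})) and collapses them with a long Schur-complement identity ending in $Q_{XX}^{-1}$; your composite identity $\mathcal{R}Q_{XX}^{-1}\lim(h^{\prime}\Phi_{j}M_{Z}\tilde{X}/T)\,Q_{\tilde{X}\cdot Z}^{-1}=[\lambda^{-1}\bar{\phi}_{j,1}\mathcal{R}Q_{XX}^{-1},\,-(1-\lambda)^{-1}\bar{\phi}_{j,2}\mathcal{R}Q_{XX}^{-1}]$ is true (verifiable by the same right-multiplication trick) and reproduces the paper's (\ref{RQWW_beta}) in a few lines, after which the cancellation against (\ref{RQWU}) and the demeaning that turns $\phi_{j}$ into $\tilde{\phi}_{j}(\cdot;\lambda)$ go through verbatim; your observation $Z^{\prime}h/T\rightarrow^{p}0$ is a nice conceptual explanation absent from the paper. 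The one obligation your shortcut creates that the paper's brute-force route avoids: dropping the left-hand $M_{Z}$ (replacing $h^{\prime}M_{Z}\Phi_{j}$ by $h^{\prime}\Phi_{j}$) and replacing $R\hat{Q}_{\tilde{X}\cdot Z}^{-1}$ by its limit both require the multiplicands to be $O_{p}(1)$, e.g.\ $Z^{\prime}\Phi_{j}M_{Z}u/\sqrt{T}=O_{p}(1)$ jointly over $j$, which does follow from Assumptions \ref{Weight_function}, \ref{Assumption WW}, and \ref{Assumption WU} by the same summation-by-parts/FCLT arguments underlying Lemma \ref{Lemma_1}; the paper sidesteps such $o_{p}(1)$ replacements by carrying all four expansion terms explicitly, at the cost of considerably heavier algebra.
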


Theorem \ref{Theorem:General_Case} shows that the limiting distributions of
the Wald statistic and t statistic are the same as in the case without the
extra covariate $Z_{t}.$ The asymptotic F and t limit theory can be developed
in exactly the same way as in Section \ref{Sec: F and t}. We present the
result formally as a corollary.

\begin{corollary}
Let Assumptions \ref{Weight_function}, \ref{Assumption WW}, and
\ref{Assumption WU} hold. Suppose that the Gram--Schmidt transformed basis
vectors $\boldsymbol{\phi}_{1}^{\ast},...,\boldsymbol{\phi}_{K}^{\ast}$ are
used in constructing the variance estimator, that is,
\[
\hat{\Omega}=\frac{1}{K}\sum_{j=1}^{K}\left[  \frac{1}{\sqrt{T}}\sum_{t=1}%
^{T}\phi_{j,t}^{\ast}\tilde{X}_{z,t}^{\prime}\hat{u}_{t}\right]  ^{\otimes2}%
\]
where $\phi_{j,t}^{\ast}$ is the $t$-th element of the vector
$\boldsymbol{\phi}_{j}^{\ast}.$ Then
\[
\tilde{F}_{T}^{\ast}:=\frac{K-p+1}{Kp}\cdot\lambda\left(  1-\lambda\right)
\cdot F_{T}\rightarrow^{d}F_{p,K-p+1},
\]
and%
\[
\tilde{t}_{T}^{\ast}:=\sqrt{\lambda\left(  1-\lambda\right)  }\cdot
t_{T}\rightarrow^{d}t_{K}.
\]

\end{corollary}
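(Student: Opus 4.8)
The plan is to deduce the corollary directly from Theorem \ref{Theorem:General_Case} together with the basis-design argument of Section \ref{bases_design}, since essentially all of the analytic work is already contained in that theorem. Part (c) of Theorem \ref{Theorem:General_Case} shows that, even in the presence of the time-invariant covariate $Z_t$, the Wald and $t$ statistics converge to exactly the same limits $F_\infty$ and $t_\infty$ as in the base model of Theorem \ref{Theorem_main1}. The mechanism behind this invariance — which I would recall briefly — is that parts (a) and (b) produce a common matrix factor $A:=\mathcal{R}Q_{XX}^{-1}\Lambda_X$: the centered estimator converges to $A$ times $\int_0^1\tilde\phi_0(r;\lambda)\,dW_{m+\ell}(r)$, while each series component converges to $A$ times $\int_0^1\tilde\phi_j(r;\lambda)\,dW_{m+\ell}(r)$. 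When these are assembled into $F_T$, the factor $A$ appears once in the quadratic form and twice (as $AA'$) in the weighting matrix; writing $AW_{m+\ell}\overset{d}{=}(AA')^{1/2}W_p$ and assuming $AA'$ invertible, the symmetric root $(AA')^{1/2}$ cancels, collapsing $W_{m+\ell}$ to a $p$-dimensional standard Brownian motion and returning precisely $F_\infty$.

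The one point that genuinely must be checked is that the Gram--Schmidt orthonormalization of Section \ref{bases_design} is unaffected by the extra covariate. This follows because part (b) exhibits the same transformed functions $\tilde\phi_j(r;\lambda)$ as in the base case, so by Lemma \ref{Lemma: covariance} the covariance structure of the limiting functionals $\eta_j=\int_0^1\tilde\phi_j(r;\lambda)\,dW_p(r)$ is still governed by the identical kernel $C(r,s;\lambda)$. Consequently the discrete orthonormalization via the matrix $\mathbf{C}_T$ and its Cholesky factor $U_T$ produces transformed basis vectors $\boldsymbol{\phi}_1^*,\ldots,\boldsymbol{\phi}_K^*$ whose implied functions $\{\tilde\phi_j^*(r;\lambda)\}$ are asymptotically orthonormal on $L^2[0,1]$, exactly as in the model without $Z_t$. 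As argued there, this renders $\eta_j=\int_0^1\tilde\phi_j^*(r;\lambda)\,dW_p(r)$, $j=0,1,\ldots,K$, i.i.d.\ $N(0,I_p)$ with $\eta_0$ independent of $\eta_1,\ldots,\eta_K$.

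It then remains only to repeat the distributional computation underlying Proposition \ref{Prop_F}. With $\eta_0\sim N(0,I_p)$ independent of the Wishart matrix $\sum_{j=1}^K\eta_j\eta_j'\sim W_p(K,I_p)$, the quadratic form $\lambda(1-\lambda)F_\infty=\eta_0'(K^{-1}\sum_{j=1}^K\eta_j\eta_j')^{-1}\eta_0$ is an uncentered Hotelling-type statistic, and the standard identity $\frac{K-p+1}{p}\,\eta_0'(\sum_{j=1}^K\eta_j\eta_j')^{-1}\eta_0\sim F_{p,K-p+1}$ gives $\frac{K-p+1}{Kp}\lambda(1-\lambda)F_\infty\sim F_{p,K-p+1}$; the $p=1$ specialization yields $\sqrt{\lambda(1-\lambda)}\,t_\infty\sim t_K$ with no degrees-of-freedom factor. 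Combining these with the convergence in Theorem \ref{Theorem:General_Case}(c) delivers $\tilde F_T^*\rightarrow^{d}F_{p,K-p+1}$ and $\tilde t_T^*\rightarrow^{d}t_K$. I expect no serious obstacle: the substantive difficulty lay in establishing Theorem \ref{Theorem:General_Case}, and once that is in hand the corollary is a routine transcription of the Section \ref{Sec: F and t} argument, the only genuine verification being the invariance of the covariance kernel $C(r,s;\lambda)$ noted above.
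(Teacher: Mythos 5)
Your proposal is correct and follows essentially the same route as the paper, which treats the corollary as an immediate consequence of Theorem \ref{Theorem:General_Case}(c) (the limits are unchanged by the presence of $Z_t$, via the cancellation of the common factor $\mathcal{R}Q_{XX}^{-1}\Lambda_X$) combined with the Section \ref{bases_design} Gram--Schmidt argument and the Hotelling $T^2$--to--$F$ computation of Proposition \ref{Prop_F}. Your one added verification---that the covariance kernel $C(r,s;\lambda)$ and hence the orthonormalization are unaffected by the extra covariate---is precisely the point the paper leaves implicit in the sentence preceding the corollary, so no gap remains.
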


\section{Simulation\ Evidence\label{Sec: Simulation}}

In this section, we investigate the finite sample properties of the proposed F
test. We consider the linear regression model with $m=2$ and $X_{t}%
=(1,q_{t}).$ The regressor $q_{t}$ follows an AR(1) process, and the error
$u_{t}$ follows an independent AR(1) or ARMA(1,1) process. That is,
\begin{align*}
q_{t}  &  =\rho q_{t-1}+\epsilon_{q,t}\\
u_{t}  &  =\rho u_{t-1}+\epsilon_{u,t}+\psi\epsilon_{u,t-1}%
\end{align*}
where\ both $\epsilon_{q,t}$ and $\epsilon_{u,t}$ are iid $N(0,1)$ and
$\left\{  \epsilon_{q,t},t=1,\ldots,T\right\}  $ are independent of $\left\{
\epsilon_{u,t}:t=1,2,\ldots,T\right\}  .$ Note that the AR parameter $\rho$ is
the same for $q_{t}$ and $u_{t}.$

We consider the sample sizes $T=100,200$, and $500.$ We let $\lambda=0.4.$
Without the loss of generality, we set $\beta_{1}=(0,0)^{\prime}$ and
$\beta_{2}=(0,0)^{\prime}$ under the null. We consider testing $H_{0}%
:\beta_{1}=\beta_{2}$ against $H_{1}:\beta_{1}\neq\beta_{2}$ so that $p=2$.

We consider two pairs of different tests, both of which are based on the
series variance estimators. The first pair uses the (usual) Fourier bases
\begin{equation}
\left\{  \phi_{2j-1}\left(  r\right)  =\sqrt{2}\cos\left(  2j\pi r\right)
,\text{ }\phi_{2j}=\sqrt{2}\sin\left(  2j\pi r\right)  ,j=1,\ldots
,K/2\right\}  . \label{cosine-sine}%
\end{equation}
Each test in this pair is based on the same test statistic $F_{T}^{\ast}$
defined in (\ref{F_star}) but uses different reference distributions. The
first test uses the chi-square approximation ($\chi_{2}^{2}$) while the second
test uses the nonstandard fixed-smoothing approximation given in
(\ref{F_star_nonstandard}). We refer to the two tests as \textquotedblleft%
$\chi^{2}:$ Fourier Bases\textquotedblright\ and \textquotedblleft$F_{\infty
}^{\ast}:$ Fourier Bases,\textquotedblright\ respectively. The nonstandard
critical values are simulated. We approximate the standard Brownian motion in
the nonstandard distribution using scaled partial sums of 1000 iid $N(0,1)$
random variables. To compute the nonstandard critical values, we use 10,000
simulation replications.

The second pair of tests uses the transformed Fourier bases via the
Gram--Schmidt orthogonalization given in Section \ref{bases_design}. Each of
the two tests in this pair is based on the same test statistic $\tilde{F}%
_{T}^{\ast}$ defined in Proposition \ref{Prop_F}. The first test uses the
standard F approximation, and the second test uses the rescaled chi-square
distribution $Kp\left[  K-p+1\right]  ^{-1}\chi_{2}^{2}.$ Equivalently, the
second test in this pair employs the test statistic $\tilde{F}_{T}%
=\lambda\left(  1-\lambda\right)  \cdot F_{T}$ and the standard chi-square
approximation ($\chi_{2}^{2}$). We refer to the two tests as \textquotedblleft%
$\chi^{2}:$ Transformed Bases\textquotedblright\ and \textquotedblleft$F:$
Transformed Bases,\textquotedblright\ respectively. The chi-square test in the
second pair is used to illustrate the effectiveness of the F approximation in
reducing the size distortion.

The nominal level of all tests is $5\%.$ The number of simulation replications
is 10,000. Figures \ref{figure_ar_T100} and \ref{figure_ar_T500} report the
null rejection probability for each test for the sample sizes $T=100$ and
$T=500$ when $q_{t}$ and $u_{t}$ follow independent AR(1) processes with the
same AR parameter $\rho.$ Several patterns emerge from these two figures:

\begin{itemize}
\item Regardless of the bases used, the chi-square tests over-reject the null
by a large margin, especially when $K$ is small.

\item Regardless of the bases used, the nonstandard test and F test are much
more accurate than the chi-square tests.

\item For each given value of $K,$ the null rejection probabilities of the
nonstandard test and F test are close to each other. This shows that, in terms
of size accuracy, using the F approximation (when the transformed Fourier
bases are employed) is as good as using the nonstandard approximation (when
the Fourier bases are employed). However, the F approximation is more
convenient to use and, hence, is preferred.

\item For each given value of $K,$ the null rejection probabilities of the two
chi-square tests are close to each other, although the one based on the
transformed Fourier bases is somewhat more accurate. This shows that the bases
do not have a large effect on the quality of the chi-square approximation.

\item The nonstandard test and standard F test can still have quite some size
distortion if $K$ is large and the regressor and error processes are
persistent. The size distortion comes from the bias of the variance estimator.
When $K$ is large, we take an average over a frequency window that is too
large when the processes are highly persistent, that is, when the spectral
density of $\left\{  x_{t}u_{t}\right\}  $ is not very flat at the origin. So,
it is important to use a data-driven $K$ to obtain an accurate test in practice.

\item Comparing the two figures, we see that the size distortion of every test
becomes smaller when the sample\ size is larger.
\end{itemize}

%

%TCIMACRO{\FRAME{ftbphFU}{6.461in}{6.826in}{0pt}{\Qcb{The empirical null
%rejection probabilities of different 5\% tests when $T=100$ for a range of
%different K values from 2 to 20 with increment 2. }}{\Qlb{figure_ar_T100}%
%}{size_t100_lambda_4.eps}{\special{ language "Scientific Word";
%type "GRAPHIC";  maintain-aspect-ratio TRUE;  display "USEDEF";
%valid_file "F";  width 6.461in;  height 6.826in;  depth 0pt;
%original-width 6.3987in;  original-height 6.762in;  cropleft "0";
%croptop "1";  cropright "1";  cropbottom "0";
%filename '';file-properties "XNPEU";}%
%}}%
%BeginExpansion
\begin{figure}[ptbh]%
\centering
\includegraphics[
height=6.826in,
width=6.461in
]%
{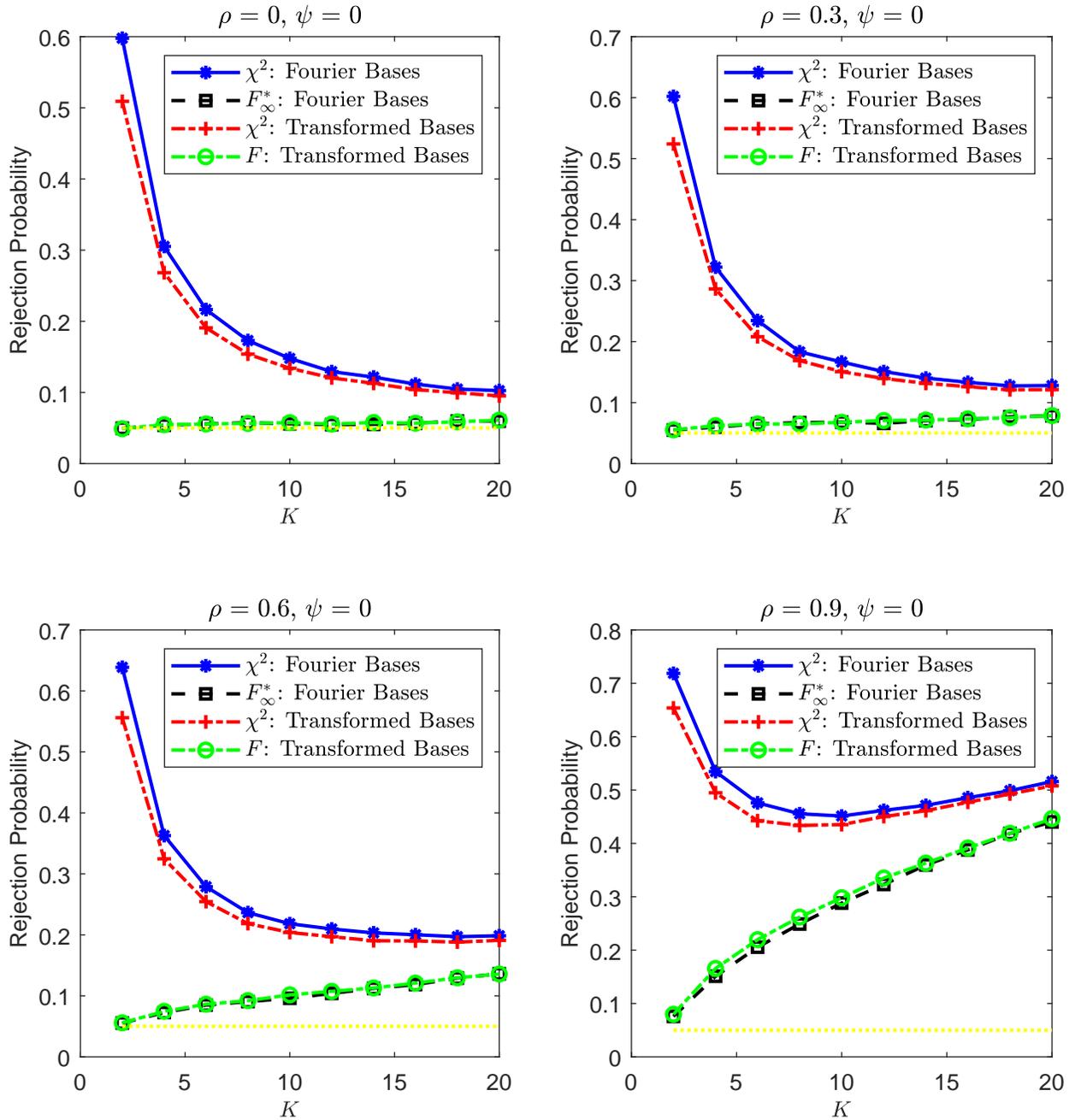}%
\caption{The empirical null rejection probabilities of different 5\% tests
when $T=100$ for a range of different K values from 2 to 20 with increment 2.
}%
\label{figure_ar_T100}%
\end{figure}
%EndExpansion
%

%TCIMACRO{\FRAME{ftbphFU}{6.461in}{6.826in}{0pt}{\Qcb{The empirical null
%rejection probabilities of different 5\% tests when $T=500$ for a range of
%different K values from 2 to 20 with increment 2. }}{\Qlb{figure_ar_T500}%
%}{size_t500_lambda_4.eps}{\special{ language "Scientific Word";
%type "GRAPHIC";  maintain-aspect-ratio TRUE;  display "USEDEF";
%valid_file "F";  width 6.461in;  height 6.826in;  depth 0pt;
%original-width 6.3987in;  original-height 6.762in;  cropleft "0";
%croptop "1";  cropright "1";  cropbottom "0";
%filename '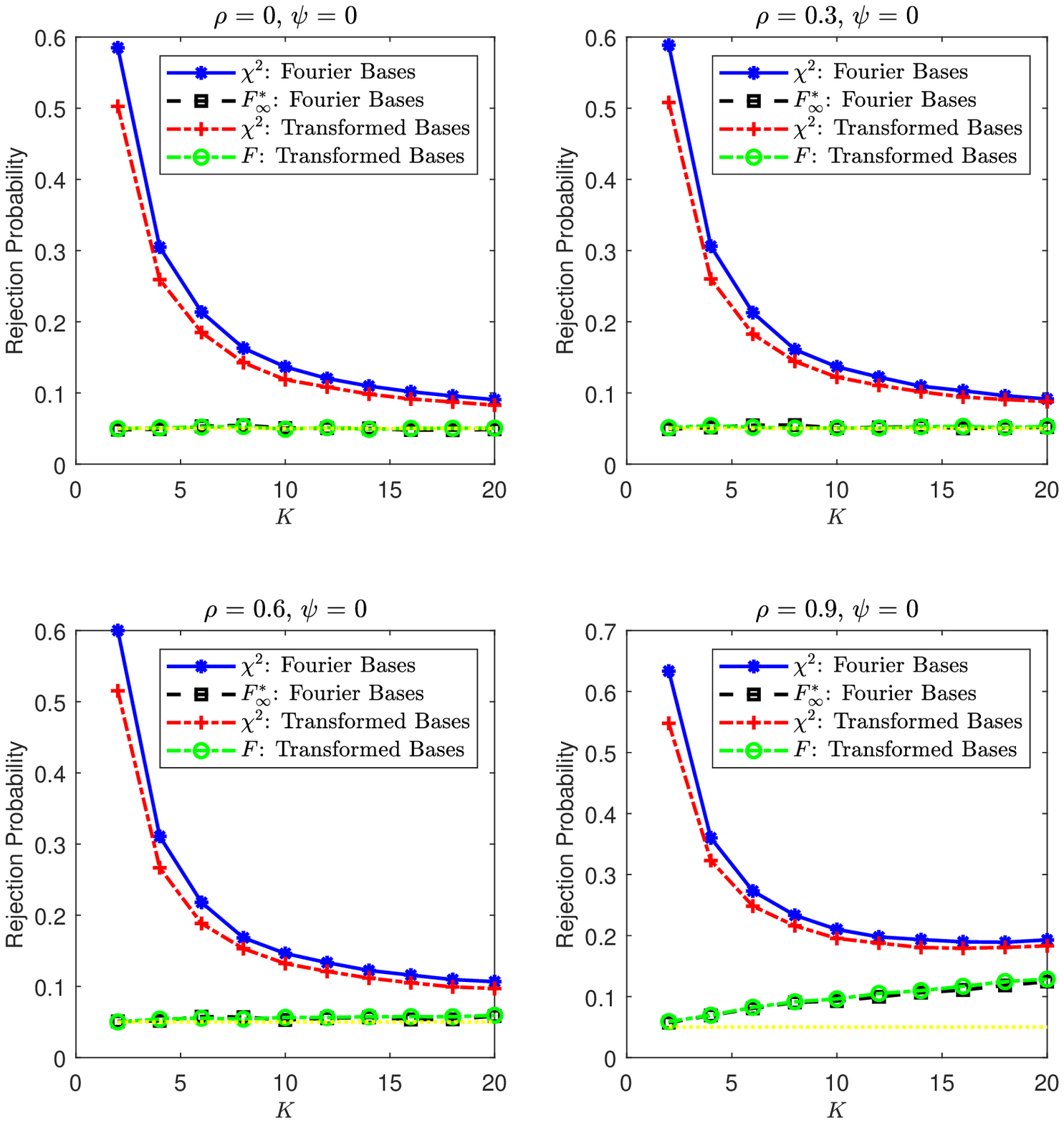';file-properties "XNPEU";}%
%}}%
%BeginExpansion
\begin{figure}[ptbh]%
\centering
\includegraphics[
height=6.826in,
width=6.461in
]%
{graphics/size_T500_lambda_4.eps}%
\caption{The empirical null rejection probabilities of different 5\% tests
when $T=500$ for a range of different K values from 2 to 20 with increment 2.
}%
\label{figure_ar_T500}%
\end{figure}
%EndExpansion

Figure \ref{Figure_ARMAT200} reports the null rejection probabilities when the
sample size $T$ is $200$ and when the error process may have an MA component
and the AR parameter may be negative. As in Figures \ref{figure_ar_T100} and
\ref{figure_ar_T500}, the same patterns emerge.%

%TCIMACRO{\FRAME{ftbphFU}{6.461in}{6.826in}{0pt}{\Qcb{The empirical null
%rejection probabilities of different 5\% tests when $T=200$ for a range of
%different K values from 2 to 20 with increment 2. }}{\Qlb{Figure_ARMAT200}%
%}{size_t200_lambda_4.eps}{\special{ language "Scientific Word";
%type "GRAPHIC";  maintain-aspect-ratio TRUE;  display "USEDEF";
%valid_file "F";  width 6.461in;  height 6.826in;  depth 0pt;
%original-width 6.3987in;  original-height 6.762in;  cropleft "0";
%croptop "1";  cropright "1";  cropbottom "0";
%filename '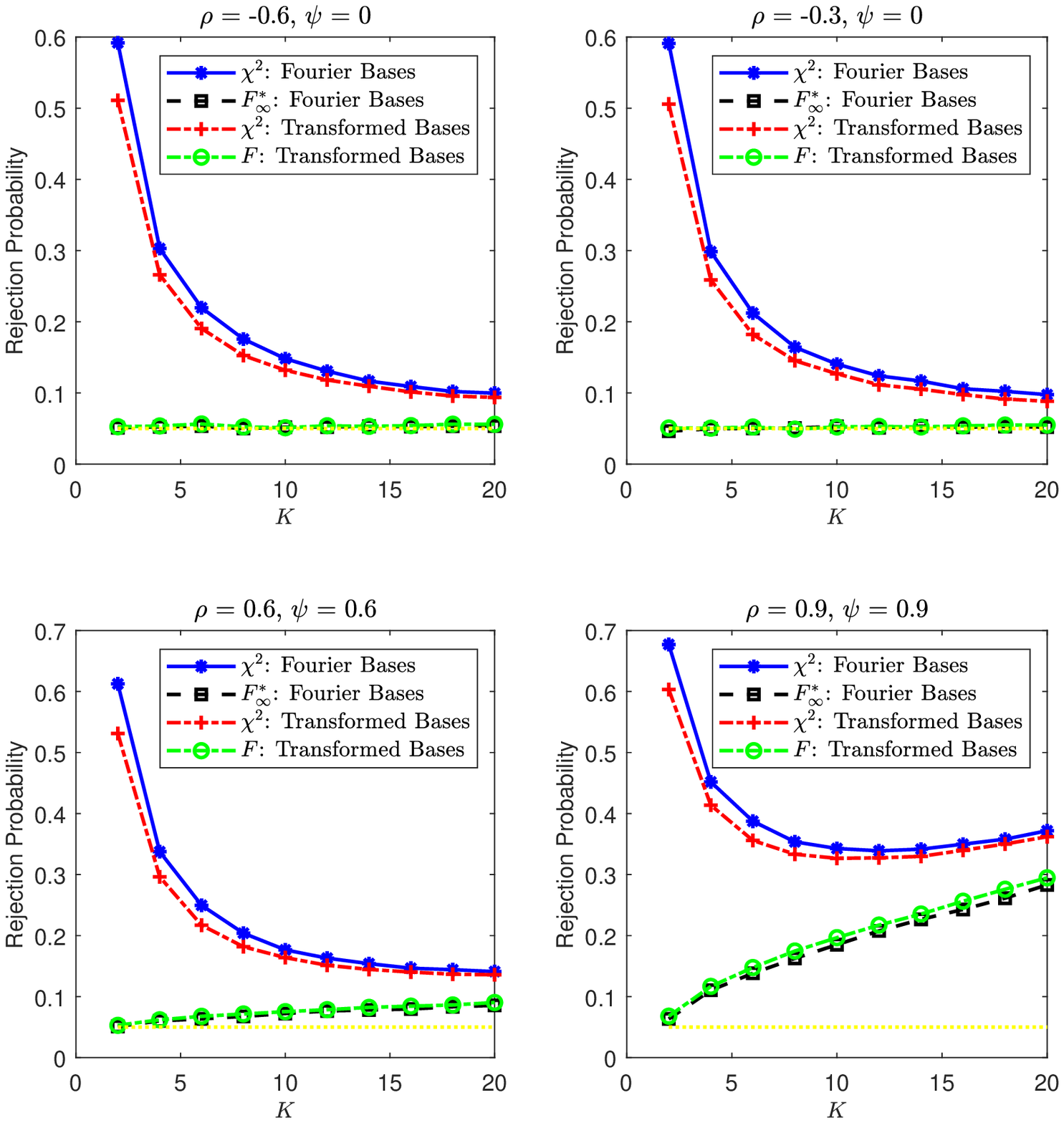';file-properties "XNPEU";}%
%}}%
%BeginExpansion
\begin{figure}[ptbh]%
\centering
\includegraphics[
height=6.826in,
width=6.461in
]%
{graphics/size_T200_lambda_4.eps}%
\caption{The empirical null rejection probabilities of different 5\% tests
when $T=200$ for a range of different K values from 2 to 20 with increment 2.
}%
\label{Figure_ARMAT200}%
\end{figure}
%EndExpansion

Next, we consider the size properties of the tests with a data-driven $K.$
Note that%
\[
R\sqrt{T}(\hat{\beta}-\beta)=\frac{1}{\sqrt{T}}\sum_{t=1}^{T}R\left(
\frac{\tilde{X}^{\prime}\tilde{X}}{T}\right)  ^{-1}\tilde{X}_{t}^{\prime}%
u_{t}=\frac{1}{\sqrt{T}}\sum_{t=1}^{T}v_{t}+o_{p}\left(  1\right)
\]
where $v_{t}=RQ^{-1}\tilde{X}_{t}^{\prime}u_{t}.$ Then
\[
R\hat{Q}^{-1}\hat{\Omega}Q^{-1}R^{\prime}=\frac{1}{K}\sum_{j=1}^{K}\left[
\frac{1}{\sqrt{T}}\sum_{t=1}^{T}\phi_{j}\left(  \frac{t}{T}\right)  \hat
{v}_{t}\right]  ^{\otimes2},
\]
where $\hat{v}_{t}=R\hat{Q}^{-1}\tilde{X}_{t}^{\prime}u_{t}.$ So $R\hat
{Q}^{-1}\hat{\Omega}Q^{-1}R^{\prime}$ can be viewed as the series variance
estimator of the long run variance of $\left\{  v_{t}\right\}  .$ We can
follow
%TCIMACRO{\TeXButton{\citet{Phillips2005}}{\citet{Phillips2005}} }%
%BeginExpansion
\citet{Phillips2005}
%EndExpansion
and choose $K$ to minimize the mean square error (MSE) of $RQ^{-1}\hat{\Omega
}Q^{-1}R^{\prime}.$ We fit a VAR(1) model to $\hat{v}_{t}$ and use the fitted
model to compute the data-driven MSE-optimal $K.$

Table \ref{tab:data_driven_size} reports the null rejection probabilities and
the average values of $K$ used with data-driven choice of $K$ for different
sample sizes. The qualitative observations from Figures \ref{figure_ar_T100}--
\ref{Figure_ARMAT200} continue to hold with the data-driven $K.$ In
particular, the nonstandard test and standard F test are more accurate than
the corresponding chi-square tests, especially when the latter have large
positive size distortion. The null rejection probabilities of the nonstandard
test and the standard F test are close to each other. Similarly, the null
rejection probabilities of the two chi-square tests are close to each other.
As expected, the average value of $K$ decreases with the persistence of the
underlying processes. The higher the persistence, the smaller the average $K$
value, and the more effective the nonstandard test and standard F test in
reducing the size distortion.%

%TCIMACRO{\TeXButton{B}{\begin{table}[htbp]}}%
%BeginExpansion
\begin{table}[htbp]%
%EndExpansion

\caption{The empirical null rejection probabilities of different 5\% tests with data-driven choice of $K$.}\medskip

\begin{tabular}
[c]{lcccccccc}\hline\hline
& $\rho=0$ & $\rho=0.3$ & $\rho=0.6$ & $\rho=0.9$ & $\rho=-0.6$ & $\rho=-0.3$
& $\rho=0.6$ & $\rho=0.9$\\
& $\psi=0$ & $\psi=0$ & $\psi=0$ & $\psi=0$ & $\psi=0$ & $\psi=0$ & $\psi=0.6$
& $\psi=.0.9$\\\hline
& \multicolumn{8}{c}{$T=100$}\\\cline{2-9}%
$\chi^{2}$: Fourier & 0.092 & 0.131 & 0.227 & 0.511 & 0.128 & 0.091 & 0.287 &
0.558\\
$F_{\infty}^{\ast}$: Fourier & 0.060 & 0.076 & 0.101 & 0.198 & 0.067 & 0.056 &
0.087 & 0.170\\
$\chi^{2}$: Transformed & 0.089 & 0.124 & 0.210 & 0.473 & 0.119 & 0.085 &
0.259 & 0.516\\
$F$: Transformed & 0.064 & 0.079 & 0.101 & 0.209 & 0.071 & 0.060 & 0.088 &
0.182\\
Ave$(K)$ & 30.00 & 18.40 & 9.71 & 5.29 & 16.57 & 26.34 & 6.14 & 4.27\\
&  &  &  &  &  &  &  & \\
& \multicolumn{8}{c}{$T=200$}\\\cline{2-9}%
$\chi^{2}$: Fourier & 0.069 & 0.100 & 0.153 & 0.396 & 0.092 & 0.070 & 0.197 &
0.444\\
$F_{\infty}^{\ast}$: Fourier & 0.052 & 0.066 & 0.079 & 0.150 & 0.055 & 0.050 &
0.075 & 0.131\\
$\chi^{2}$: Transformed & 0.068 & 0.094 & 0.142 & 0.363 & 0.088 & 0.067 &
0.179 & 0.406\\
$F$: Transformed & 0.057 & 0.069 & 0.082 & 0.153 & 0.058 & 0.051 & 0.074 &
0.135\\
Ave$(K)$ & 70 & 28.82 & 14.32 & 6.10 & 24.96 & 46.02 & 8.56 & 4.56\\
&  &  &  &  &  &  &  & \\
& \multicolumn{8}{c}{$T=500$}\\\cline{2-9}%
$\chi^{2}$: Fourier & 0.055 & 0.068 & 0.096 & 0.222 & 0.067 & 0.057 & 0.119 &
0.278\\
$F_{\infty}^{\ast}$: Fourier & 0.049 & 0.054 & 0.062 & 0.091 & 0.053 & 0.048 &
0.056 & 0.084\\
$\chi^{2}$: Transformed & 0.053 & 0.064 & 0.091 & 0.209 & 0.064 & 0.055 &
0.110 & 0.253\\
$F$: Transformed & 0.048 & 0.053 & 0.062 & 0.096 & 0.051 & 0.048 & 0.058 &
0.086\\
Ave$(K)$ & 144.51 & 56.47 & 26.91 & 9.03 & 46.58 & 96.41 & 15.19 &
6.05\\\hline\hline
\end{tabular}

\label{tab:data_driven_size}
%TCIMACRO{\TeXButton{E}{\end{table}}}%
%BeginExpansion
\end{table}%
%EndExpansion

To simulate the power of the tests, we let $\beta_{1}=\left(  0,0\right)  $
and $\beta_{2}=\left(  \delta,\delta\right)  .$ Figure \ref{Figure_power}
presents the size-adjusted power curves as functions of $\delta$ when the
sample size is 200 and when both $q_{t}$ and $u_{t}$ follow AR(1) processes.
The figure is representative of other cases. For the two tests in each pair,
the size-adjusted powers are the same, as they are based on the same test
statistic. Thus, we need only report two power curves: one for the usual
Fourier bases and the other for the transformed Fourier bases. The basic
message from Figure \ref{Figure_power} is that the size-adjusted powers
associated with the two sets of bases are very close to each other. This,
coupled with its size accuracy and convenience to use, suggests that we use
the F test in empirical applications.%

%TCIMACRO{\FRAME{fhFU}{6.2578in}{6.8338in}{0pt}{\Qcb{The size-adjusted power
%curves for different 5\% tests when $T=200.$}}{\Qlb{Figure_power}%
%}{power_t200_lambda_4.eps}{\special{ language "Scientific Word";
%type "GRAPHIC";  maintain-aspect-ratio TRUE;  display "USEDEF";
%valid_file "F";  width 6.2578in;  height 6.8338in;  depth 0pt;
%original-width 6.231in;  original-height 6.8044in;  cropleft "0";
%croptop "1";  cropright "1";  cropbottom "0";
%filename '';file-properties "XNPEU";}%
%}}%
%BeginExpansion
\begin{figure}[h]%
\centering
\includegraphics[
height=6.8338in,
width=6.2578in
]%
{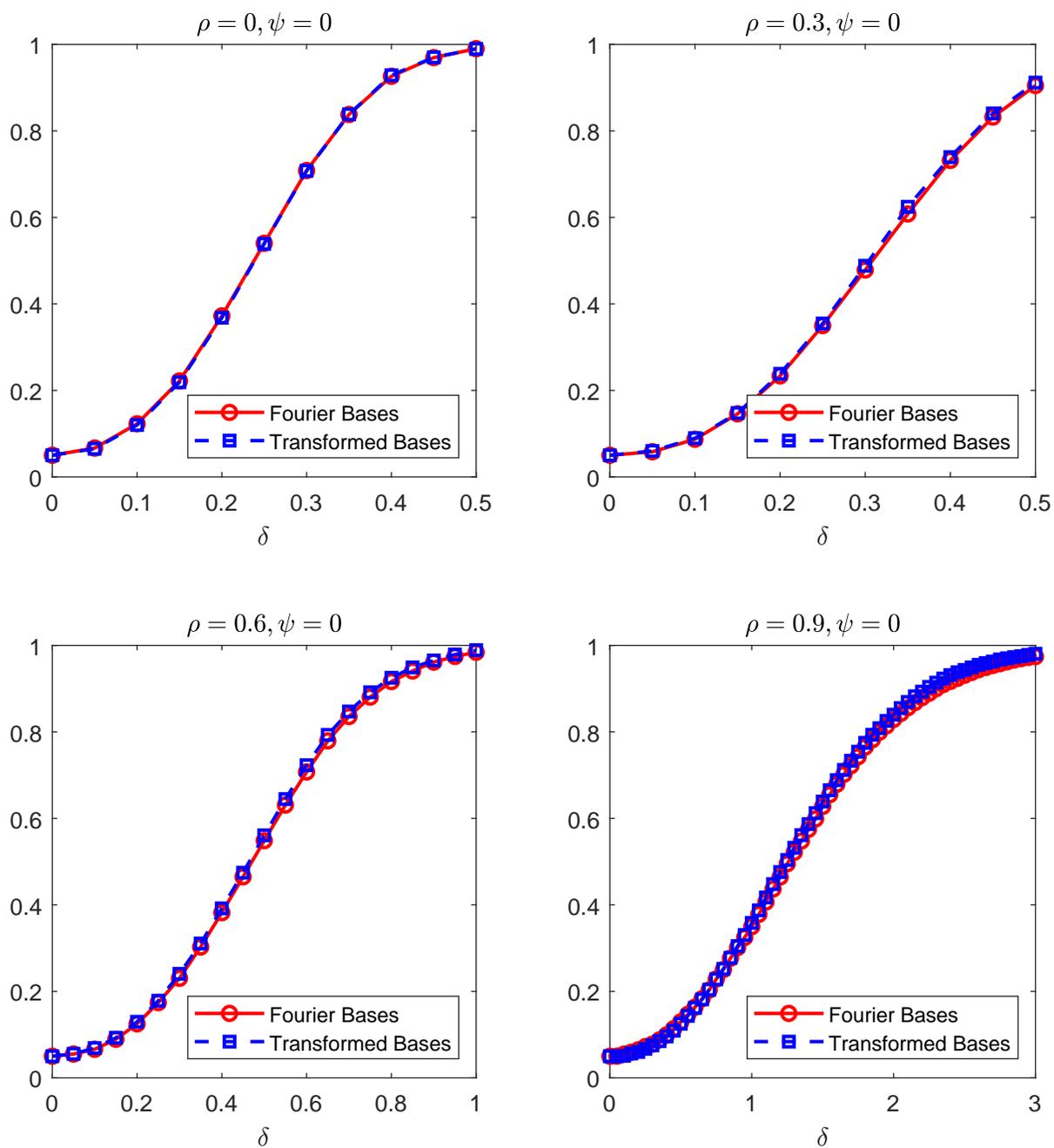}%
\caption{The size-adjusted power curves for different 5\% tests when $T=200.$}%
\label{Figure_power}%
\end{figure}
%EndExpansion

\section{Conclusion}

This study proposes asymptotic F and t tests for structural breaks that are
robust to heteroscedasticity and autocorrelation. The tests are based on a
special series HAR variance estimator where the basis functions are crafted
via the Gram--Schmidt orthonormalization. Monte Carlo simulations show that
the F test is much more accurate than the corresponding chi-square test.

This study assumes that there is a single known break point. The asymptotic F
and t theory can be extended to the case with multiple but known break points.
The theory can also be extended to allow for a linear trend or other
deterministic trends, but we need to redesign the basis functions. In
principle, the tests based on series HAR variance estimation can be extended
to accommodate the case with an unknown break point along the line of
\citet{Cho_Vogelsang2017}. All the basic ingredients have been established in
the study. We only need to take the supremum (or other functionals) of the
Wald or t statistic over $\lambda$ as the test statistic. However, the
convenient F approximation is lost, as the supremum of the standard
distributions is not standard any more. Therefore, it is not clear whether
there is still an advantage of using series HAR variance estimators rather
than kernel HAR variance estimators.

\clearpage

\section{Appendix of Proofs}

\begin{proof}
[Proof of Lemma \ref{Lemma_1}]Under Assumption \ref{Assumption XX}, we have
\[
\frac{\tilde{X}^{\prime}\tilde{X}}{T}=\left(
\begin{array}
[c]{cc}%
T^{-1}\sum_{t=1}^{[T\lambda]}X_{t}^{\prime}X_{t} & O\\
O & T^{-1}\sum_{t=[T\lambda]+1}^{T}X_{t}^{\prime}X_{t}%
\end{array}
\right)  \rightarrow^{p}\left(
\begin{array}
[c]{cc}%
\lambda Q & O\\
O & \left(  1-\lambda\right)  Q
\end{array}
\right)  .
\]
Under Assumption \ref{Assumption XU}, we have%
\[
\frac{\tilde{X}^{\prime}u}{\sqrt{T}}=\left(
\begin{array}
[c]{c}%
T^{-1}\sum_{t=1}^{[T\lambda]}X_{t}^{\prime}u_{t}\\
T^{-1}\sum_{t=[T\lambda]+1}^{T}X_{t}^{\prime}u_{t}%
\end{array}
\right)  \rightarrow^{d}\left(
\begin{array}
[c]{c}%
\Lambda W_{m}\left(  \lambda\right) \\
\Lambda\left[  W_{m}\left(  1\right)  -W_{m}\left(  \lambda\right)  \right]
\end{array}
\right)  .
\]
Hence%
\begin{align*}
\sqrt{T}(\hat{\beta}-\beta)  &  \rightarrow^{d}\left(
\begin{array}
[c]{cc}%
\lambda Q & O\\
O & \left(  1-\lambda\right)  Q
\end{array}
\right)  ^{-1}\left(
\begin{array}
[c]{c}%
\Lambda W_{m}\left(  \lambda\right) \\
\Lambda\left[  W_{m}\left(  1\right)  -W_{m}\left(  \lambda\right)  \right]
\end{array}
\right) \\
&  =\left(
\begin{array}
[c]{c}%
\left(  \lambda Q\right)  ^{-1}\Lambda W_{m}\left(  \lambda\right) \\
\left[  \left(  1-\lambda\right)  Q\right]  ^{-1}\Lambda\left[  W_{m}\left(
1\right)  -W_{m}\left(  \lambda\right)  \right]
\end{array}
\right) \\
&  =\left(
\begin{array}
[c]{c}%
Q^{-1}\Lambda\cdot\frac{W_{m}\left(  \lambda\right)  }{\lambda}\\
Q^{-1}\Lambda\cdot\frac{W_{m}\left(  1\right)  -W_{m}\left(  \lambda\right)
}{1-\lambda}%
\end{array}
\right)  =\left(
\begin{array}
[c]{c}%
Q^{-1}\Lambda\cdot\frac{1}{\lambda}\int_{0}^{\lambda}dW_{m}\left(
\lambda\right) \\
Q^{-1}\Lambda\cdot\frac{1}{1-\lambda}\int_{\lambda}^{1}dW_{m}\left(
\lambda\right)
\end{array}
\right)  .
\end{align*}

For the second part of the lemma, we have
\begin{align*}
&  \frac{1}{\sqrt{T}}\sum_{t=1}^{T}\phi_{j}\left(  \frac{t}{T}\right)
\tilde{X}_{t}^{\prime}\hat{u}_{t}\\
&  =\frac{1}{\sqrt{T}}\sum_{t=1}^{T}\phi_{j}\left(  \frac{t}{T}\right)
\tilde{X}_{t}^{\prime}\left(  Y_{t}-\tilde{X}_{t}\hat{\beta}\right) \\
&  =\frac{1}{\sqrt{T}}\sum_{t=1}^{T}\phi_{j}\left(  \frac{t}{T}\right)
\tilde{X}_{t}^{\prime}\left(  \tilde{X}_{t}\beta+u_{t}-\tilde{X}_{t}\hat
{\beta}\right) \\
&  =\frac{1}{\sqrt{T}}\sum_{t=1}^{T}\phi_{j}\left(  \frac{t}{T}\right)
\tilde{X}_{t}^{\prime}u_{t}-\left[  \frac{1}{T}\sum_{t=1}^{T}\phi_{j}\left(
\frac{t}{T}\right)  \tilde{X}_{t}^{\prime}\tilde{X}_{t}\right]  \sqrt{T}%
(\hat{\beta}-\beta).
\end{align*}
Now, it is not hard to show that under Assumption \ref{Weight_function},
\begin{align*}
&  \frac{1}{T}\sum_{t=1}^{T}\phi_{j}\left(  \frac{t}{T}\right)  \tilde{X}%
_{t}^{\prime}\tilde{X}_{t}\\
&  =\left(
\begin{array}
[c]{cc}%
\frac{1}{T}\sum_{t=1}^{[T\lambda]}\phi_{j}\left(  \frac{t}{T}\right)
X_{t}^{\prime}X_{t} & 0\\
0 & \frac{1}{T}\sum_{t=[T\lambda]+1}^{T}\phi_{j}\left(  \frac{t}{T}\right)
X_{t}^{\prime}X_{t}%
\end{array}
\right) \\
&  \rightarrow^{p}\left(
\begin{array}
[c]{cc}%
\left[  \int_{0}^{\lambda}\phi_{j}\left(  r\right)  dr\right]  Q & O\\
O & \left[  \int_{\lambda}^{1}\phi_{j}\left(  r\right)  dr\right]  Q
\end{array}
\right)  .
\end{align*}
Hence,
\begin{align*}
&  \frac{1}{\sqrt{T}}\sum_{t=1}^{T}\phi_{j}\left(  \frac{t}{T}\right)
\tilde{X}_{t}^{\prime}\hat{u}_{t}\\
&  \rightarrow^{d}\left(
\begin{array}
[c]{c}%
\Lambda\int_{0}^{\lambda}\phi_{j}\left(  r\right)  dW_{m}\left(  r\right) \\
\Lambda\int_{\lambda}^{1}\phi_{j}\left(  r\right)  dW_{m}\left(  r\right)
\end{array}
\right) \\
&  -\left(
\begin{array}
[c]{cc}%
\left[  \int_{0}^{\lambda}\phi_{j}\left(  r\right)  dr\right]  Q & O\\
O & \left[  \int_{\lambda}^{1}\phi_{j}\left(  r\right)  dr\right]  Q
\end{array}
\right)  \left(
\begin{array}
[c]{c}%
Q^{-1}\Lambda\cdot\frac{W_{m}\left(  \lambda\right)  }{\lambda}\\
Q^{-1}\Lambda\cdot\frac{W_{m}\left(  1\right)  -W_{m}\left(  \lambda\right)
}{\left(  1-\lambda\right)  }%
\end{array}
\right) \\
&  =\left(
\begin{array}
[c]{c}%
\Lambda\left\{  \int_{0}^{\lambda}\phi_{j}\left(  r\right)  dW_{m}\left(
r\right)  -\frac{1}{\lambda}\int_{0}^{\lambda}\phi_{j}\left(  r\right)
dr\cdot W_{m}\left(  \lambda\right)  \right\} \\
\Lambda\left\{  \int_{\lambda}^{1}\phi_{j}\left(  r\right)  dW_{m}\left(
r\right)  -\frac{1}{1-\lambda}\int_{\lambda}^{1}\phi_{j}\left(  r\right)
dr\left[  W_{m}\left(  1\right)  -W_{m}\left(  \lambda\right)  \right]
\right\}
\end{array}
\right) \\
&  =\left(
\begin{array}
[c]{c}%
\Lambda\int_{0}^{\lambda}\left[  \phi_{j}\left(  r\right)  -\bar{\phi}%
_{j,1}\right]  dW_{m}\left(  r\right) \\
\Lambda\int_{\lambda}^{1}\left[  \phi_{j}\left(  r\right)  -\bar{\phi}%
_{j,2}\right]  dW_{m}\left(  r\right)
\end{array}
\right)  .
\end{align*}

\end{proof}

\bigskip

\begin{proof}
[Proof of Theorem \ref{Theorem_main1}]We have%
\[
R\hat{Q}^{-1}\rightarrow^{d}\left(  \mathcal{R},-\mathcal{R}\right)  \left(
\begin{array}
[c]{cc}%
\lambda^{-1}Q^{-1} & O\\
O & \left(  1-\lambda\right)  ^{-1}Q^{-1}%
\end{array}
\right)  =\left(
\begin{array}
[c]{cc}%
\lambda^{-1}\mathcal{R}Q^{-1}, & -\left(  1-\lambda\right)  ^{-1}%
\mathcal{R}Q^{-1}%
\end{array}
\right)  ,
\]%
\[
\hat{\Omega}\rightarrow^{d}\frac{1}{K}\sum_{j=1}^{K}\left(
\begin{array}
[c]{cc}%
\Lambda & O\\
O & \Lambda
\end{array}
\right)  \left(
\begin{array}
[c]{c}%
\int_{0}^{\lambda}\left[  \phi_{j}\left(  r\right)  -\bar{\phi}_{j,1}\right]
dW_{m}\left(  r\right) \\
\int_{\lambda}^{1}\left[  \phi_{j}\left(  r\right)  -\bar{\phi}_{j,2}\right]
dW_{m}\left(  r\right)
\end{array}
\right)  ^{\otimes2}\left(
\begin{array}
[c]{cc}%
\Lambda & O\\
O & \Lambda
\end{array}
\right)  .
\]
Hence,%
\begin{align*}
&  R\hat{Q}^{-1}\hat{\Omega}Q^{-1}R^{\prime}\\
&  \rightarrow^{d}\left(
\begin{array}
[c]{cc}%
\lambda^{-1}\mathcal{R}Q^{-1}, & -\left(  1-\lambda\right)  ^{-1}%
\mathcal{R}Q^{-1}%
\end{array}
\right)  \left(
\begin{array}
[c]{cc}%
\Lambda & O\\
O & \Lambda
\end{array}
\right) \\
&  \times\frac{1}{K}\sum_{j=1}^{K}\left(
\begin{array}
[c]{c}%
\int_{0}^{\lambda}\left[  \phi_{j}\left(  r\right)  -\bar{\phi}_{j,1}\right]
dW_{m}\left(  r\right) \\
\int_{\lambda}^{1}\left[  \phi_{j}\left(  r\right)  -\bar{\phi}_{j,2}\right]
dW_{m}\left(  r\right)
\end{array}
\right)  ^{\otimes2}\\
&  \times\left(
\begin{array}
[c]{cc}%
\Lambda & O\\
O & \Lambda
\end{array}
\right)  \left(
\begin{array}
[c]{cc}%
\lambda^{-1}\left(  \mathcal{R}Q^{-1}\right)  ^{\prime} & O\\
O & \left(  1-\lambda\right)  ^{-1}\left(  \mathcal{R}Q^{-1}\right)  ^{\prime}%
\end{array}
\right) \\
&  =\left(
\begin{array}
[c]{cc}%
\lambda^{-1}\mathcal{R}Q^{-1}\Lambda, & -\left(  1-\lambda\right)
^{-1}\mathcal{R}Q^{-1}\Lambda
\end{array}
\right)  \frac{1}{K}\sum_{j=1}^{K}\left(
\begin{array}
[c]{c}%
\int_{0}^{\lambda}\left[  \phi_{j}\left(  r\right)  -\bar{\phi}_{j,1}\right]
dW_{m}\left(  r\right) \\
\int_{\lambda}^{1}\left[  \phi_{j}\left(  r\right)  -\bar{\phi}_{j,2}\right]
dW_{m}\left(  r\right)
\end{array}
\right)  ^{\otimes2}\\
&  \times\left(
\begin{array}
[c]{c}%
\lambda^{-1}\left(  \mathcal{R}Q^{-1}\Lambda\right)  ^{\prime},\\
-\left(  1-\lambda\right)  ^{-1}\left(  \mathcal{R}Q^{-1}\Lambda\right)
^{\prime}%
\end{array}
\right) \\
&  =\mathcal{R}Q^{-1}\Lambda\frac{1}{K}\sum_{j=1}^{K}\left\{  \int%
_{0}^{\lambda}\frac{\phi_{j}\left(  r\right)  -\bar{\phi}_{j,1}}{\lambda
}dW_{m}\left(  r\right)  -\int_{\lambda}^{1}\frac{\phi_{j}\left(  r\right)
-\bar{\phi}_{j,2}}{1-\lambda}dW_{m}\left(  r\right)  \right\}  ^{\otimes
2}\left(  \mathcal{R}Q^{-1}\Lambda\right)  ^{\prime}.
\end{align*}
Also, under the null, we have
\begin{align*}
\sqrt{T}\cdot\left[  R(\hat{\beta}-\beta)\right]   &  \rightarrow^{d}\left(
\mathcal{R},-\mathcal{R}\right)  \left(
\begin{array}
[c]{c}%
Q^{-1}\Lambda\cdot\frac{1}{\lambda}\int_{0}^{\lambda}dW_{m}\left(
\lambda\right) \\
Q^{-1}\Lambda\cdot\frac{1}{1-\lambda}\int_{\lambda}^{1}dW_{m}\left(
\lambda\right)
\end{array}
\right) \\
&  =\mathcal{R}Q^{-1}\Lambda\left[  \frac{1}{\lambda}\int_{0}^{\lambda}%
dW_{m}\left(  \lambda\right)  -\frac{1}{1-\lambda}\int_{\lambda}^{1}%
dW_{m}\left(  \lambda\right)  \right]  .
\end{align*}
Therefore,
\begin{align*}
&  F_{T}=T\cdot\left[  R(\hat{\beta}-\beta)\right]  ^{\prime}\left[  R\hat
{Q}^{-1}\hat{\Omega}Q^{-1}R^{\prime}\right]  ^{-1}\left[  R(\hat{\beta}%
-\beta)\right] \\
&  \rightarrow^{d}\left\{  \mathcal{R}Q^{-1}\Lambda\left[  \frac{1}{\lambda
}\int_{0}^{\lambda}dW_{m}\left(  \lambda\right)  -\frac{1}{1-\lambda}%
\int_{\lambda}^{1}dW_{m}\left(  \lambda\right)  \right]  \right\}  ^{\prime}\\
&  \times\left[  \mathcal{R}Q^{-1}\Lambda\frac{1}{K}\sum_{j=1}^{K}\left\{
\int_{0}^{1}\tilde{\phi}_{j}\left(  r;\lambda\right)  dW_{m}\left(  r\right)
\right\}  ^{\otimes2}\left(  \mathcal{R}Q^{-1}\Lambda\right)  ^{\prime
}\right]  ^{-1}\\
&  \times\mathcal{R}Q^{-1}\Lambda\left[  \frac{1}{\lambda}\int_{0}^{\lambda
}dW_{m}\left(  \lambda\right)  -\frac{1}{1-\lambda}\int_{\lambda}^{1}%
dW_{m}\left(  \lambda\right)  \right]  .
\end{align*}
Using the fact that $\mathcal{R}Q^{-1}\Lambda W_{m}=A_{p}W_{p}$ for a square
and invertible matrix $A_{p},$ we have
\begin{align*}
&  F_{T}\\
&  \rightarrow^{d}\left[  \frac{1}{\lambda}\int_{0}^{\lambda}dW_{p}\left(
\lambda\right)  -\frac{1}{1-\lambda}\int_{\lambda}^{1}dW_{p}\left(
\lambda\right)  \right]  ^{\prime}\times\left[  \frac{1}{K}\sum_{j=1}%
^{K}\left\{  \int_{0}^{1}\tilde{\phi}_{j}\left(  r;\lambda\right)
dW_{p}\left(  r\right)  \right\}  ^{\otimes2}\right]  ^{-1}\\
&  \times\left[  \frac{1}{\lambda}\int_{0}^{\lambda}dW_{p}\left(
\lambda\right)  -\frac{1}{1-\lambda}\int_{\lambda}^{1}dW_{p}\left(
\lambda\right)  \right]  .
\end{align*}

The proof for the weak convergence of $t_{T}$ is similar and is omitted to
save space.
\end{proof}

\bigskip

\begin{proof}
[Proof of Proposition \ref{Prop_F}]We prove the part for the Wald statistic
only, as the proof for the t-statistic is similar. Given that $\{\tilde{\phi
}_{j}\left(  r;\lambda\right)  \}$ are orthonormal on $L^{2}[0,1],$ we have:
\[
\eta_{j}:=\int_{0}^{1}\tilde{\phi}_{j}\left(  r;\lambda\right)  dW_{p}\left(
r\right)  \thicksim iidN(0,I_{p}).
\]
As a consequence,
\[
\sum_{j=1}^{K}\eta_{j}\eta_{j}^{\prime}\thicksim\mathbb{W}_{p}(I_{p},K),
\]
the standard Wishart distribution with degrees of freedom $K.$ So,
\[
\frac{K-p+1}{Kp}\lambda\left(  1-\lambda\right)  F_{\infty}=\frac{K-p+1}%
{Kp}\cdot\eta_{0}^{\prime}\left(  \frac{1}{K}\sum_{j=1}^{K}\eta_{j}\eta
_{j}^{\prime}\right)  ^{-1}\eta_{0}.
\]
Note that $\eta_{0},\eta_{1},\ldots,\eta_{K}$ are independent standard normal
vectors. $\eta_{0}^{\prime}\left(  \frac{1}{K}\sum_{j=1}^{K}\eta_{j}\eta
_{j}^{\prime}\right)  ^{-1}\eta_{0}$ follows Hotelling's $T^{2}$ distribution.
Using the relationship between Hotelling's $T^{2}$ distribution and the
standard $F$ distribution, we have
\[
\frac{K-p+1}{Kp}\lambda\left(  1-\lambda\right)  F_{\infty}\thicksim
F_{p,K-p+1}.
\]
It then follows that
\[
\frac{K-p+1}{Kp}\lambda\left(  1-\lambda\right)  F_{T}\rightarrow
^{d}F_{p,K-p+1}.
\]

\end{proof}

\bigskip

\begin{proof}
[Proof of Lemma \ref{Lemma: covariance}]We have%
\begin{align*}
&  \int_{0}^{1}\tilde{\phi}_{j_{1}}\left(  r;\lambda\right)  \tilde{\phi
}_{j_{2}}\left(  r;\lambda\right)  dr\\
&  =\frac{1}{\lambda^{2}}\int_{0}^{\lambda}\left[  \phi_{j_{1}}\left(
r\right)  -\frac{1}{\lambda}\int_{0}^{\lambda}\phi_{j_{1}}\left(  s\right)
ds\right]  \left[  \phi_{j_{2}}\left(  r\right)  -\frac{1}{\lambda}\int%
_{0}^{\lambda}\phi_{j_{2}}\left(  s\right)  ds\right]  dr\\
&  +\frac{1}{\left(  1-\lambda\right)  ^{2}}\int_{\lambda}^{1}\left[
\phi_{j_{1}}\left(  r\right)  -\frac{1}{1-\lambda}\int_{\lambda}^{1}%
\phi_{j_{1}}\left(  s\right)  ds\right]  \left[  \phi_{j_{2}}\left(  r\right)
-\frac{1}{1-\lambda}\int_{\lambda}^{1}\phi_{j_{2}}\left(  s\right)  ds\right]
dr,
\end{align*}
where%
\begin{align*}
&  \frac{1}{\lambda^{2}}\int_{0}^{\lambda}\left[  \phi_{j_{1}}\left(
r\right)  -\frac{1}{\lambda}\int_{0}^{\lambda}\phi_{j_{1}}\left(  s\right)
ds\right]  \left[  \phi_{j_{2}}\left(  r\right)  -\frac{1}{\lambda}\int%
_{0}^{\lambda}\phi_{j_{2}}\left(  s\right)  ds\right]  dr\\
&  =\frac{1}{\lambda^{2}}\left[  \int_{0}^{\lambda}\int_{0}^{\lambda}%
\delta(r-s)\phi_{j_{1}}\left(  r\right)  \phi_{j_{2}}\left(  s\right)
drds-\frac{1}{\lambda}\int_{0}^{\lambda}\int_{0}^{\lambda}\phi_{j_{1}}\left(
r\right)  \phi_{j_{2}}\left(  s\right)  drds\right] \\
&  =\int_{0}^{1}\int_{0}^{1}\left[  \delta(r-s)-\frac{1}{\lambda}\right]
\frac{1\left\{  \left(  r,s\right)  \in\lbrack0,\lambda]\times\lbrack
0,\lambda]\right\}  }{\lambda^{2}}\phi_{j_{1}}\left(  r\right)  \phi_{j_{2}%
}\left(  s\right)  drds,
\end{align*}
and similarly,%
\begin{align*}
&  \frac{1}{\left(  1-\lambda\right)  ^{2}}\int_{\lambda}^{1}\left[
\phi_{j_{1}}\left(  r\right)  -\frac{1}{1-\lambda}\int_{\lambda}^{1}%
\phi_{j_{1}}\left(  s\right)  ds\right]  \left[  \phi_{j_{2}}\left(  r\right)
-\frac{1}{1-\lambda}\int_{\lambda}^{1}\phi_{j_{2}}\left(  s\right)  ds\right]
dr\\
&  =\frac{1}{\left(  1-\lambda\right)  ^{2}}\int_{\lambda}^{1}\int_{\lambda
}^{1}\left[  \delta(r-s)-\frac{1}{1-\lambda}\right]  \phi_{j_{1}}\left(
r\right)  \phi_{j_{2}}\left(  s\right)  drds\\
&  =\int_{0}^{1}\int_{0}^{1}\left[  \delta(r-s)-\frac{1}{1-\lambda}\right]
\frac{1\left\{  \left(  r,s\right)  \in\lbrack\lambda,1]\times\lbrack
\lambda,1]\right\}  }{\left(  1-\lambda\right)  ^{2}}\phi_{j_{1}}\left(
r\right)  \phi_{j_{2}}\left(  s\right)  drds.
\end{align*}
Therefore,
\[
\int_{0}^{1}\tilde{\phi}_{j_{1}}\left(  r;\lambda\right)  \tilde{\phi}_{j_{2}%
}\left(  r;\lambda\right)  dr=\int_{0}^{1}\int_{0}^{1}C(r,s;\lambda
)\phi_{j_{1}}\left(  r\right)  \phi_{j_{2}}\left(  s\right)  drds.
\]

\end{proof}

\bigskip

\begin{proof}
[Proof of Theorem \ref{Theorem:General_Case}]Part (a). Under Assumption
\ref{Assumption WW}, we have%
\begin{align*}
\hat{Q}_{\tilde{X}\cdot Z}  &  =\frac{1}{T}\tilde{X}_{z}^{\prime}\tilde{X}%
_{z}=\frac{1}{T}\tilde{X}^{\prime}M_{Z}\tilde{X}\\
&  =\frac{1}{T}\tilde{X}^{\prime}\tilde{X}-\frac{1}{T}\tilde{X}^{\prime
}Z\left(  Z^{\prime}Z\right)  ^{-1}Z^{\prime}\tilde{X}\\
&  \rightarrow^{p}\left(
\begin{array}
[c]{cc}%
\lambda Q_{XX} & O\\
O & \left(  1-\lambda\right)  Q_{XX}%
\end{array}
\right)  -\left(
\begin{array}
[c]{c}%
\lambda Q_{XZ}\\
\left(  1-\lambda\right)  Q_{XZ}%
\end{array}
\right)  Q_{ZZ}^{-1}\left(
\begin{array}
[c]{cc}%
\lambda Q_{ZX} & \left(  1-\lambda\right)  Q_{ZX}%
\end{array}
\right) \\
&  :=Q_{\tilde{X}\cdot Z}.
\end{align*}
Under Assumption \ref{Assumption WU}, we have%
\begin{align*}
\frac{1}{\sqrt{T}}\tilde{X}^{\prime}M_{Z}u  &  =\frac{1}{\sqrt{T}}\tilde
{X}^{\prime}u-\frac{\tilde{X}^{\prime}Z}{T}\left(  \frac{Z^{\prime}Z}%
{T}\right)  ^{-1}\frac{Z^{\prime}u}{T}\\
&  \rightarrow^{d}\left(
\begin{array}
[c]{c}%
\Lambda_{X}W_{m+\ell}\left(  \lambda\right) \\
\Lambda_{X}\left[  W_{m+\ell}\left(  1\right)  -W_{m+\ell}\left(
\lambda\right)  \right]
\end{array}
\right)  -\left(
\begin{array}
[c]{c}%
\lambda Q_{XZ}\\
\left(  1-\lambda\right)  Q_{XZ}%
\end{array}
\right)  Q_{ZZ}^{-1}\Lambda_{Z}W_{m+\ell}\left(  1\right) \\
&  =\left(
\begin{array}
[c]{c}%
\Lambda_{X}W_{m+\ell}\left(  \lambda\right)  -\lambda\Lambda_{XZ}W_{m+\ell
}\left(  1\right) \\
\Lambda_{X}\left[  W_{m+\ell}\left(  1\right)  -W_{m+\ell}\left(
\lambda\right)  \right]  -\left(  1-\lambda\right)  \Lambda_{XZ}W_{m+\ell
}\left(  1\right)
\end{array}
\right)  .
\end{align*}
where
\[
\Lambda_{XZ}=Q_{XZ}Q_{ZZ}^{-1}\Lambda_{Z}.
\]
Hence,%
\[
R\sqrt{T}(\hat{\beta}-\beta)\rightarrow^{d}RQ_{\tilde{X}\cdot Z}^{-1}\left(
\begin{array}
[c]{c}%
\Lambda_{X}W_{m+\ell}\left(  \lambda\right)  -\lambda\Lambda_{XZ}W_{m+\ell
}\left(  1\right) \\
\Lambda_{X}\left[  W_{m+\ell}\left(  1\right)  -W_{m+\ell}\left(
\lambda\right)  \right]  -\left(  1-\lambda\right)  \Lambda_{XZ}W_{m+\ell
}\left(  1\right)
\end{array}
\right)  .
\]
Using the matrix inverse formula
\[
\left(  A-CB^{-1}C^{\prime}\right)  ^{-1}=A^{-1}+A^{-1}C\left(  B-C^{\prime
}A^{-1}C\right)  ^{-1}C^{\prime}A^{-1},
\]
we have
\begin{align*}
&  Q_{\tilde{X}\cdot Z}^{-1}\\
&  =\left(
\begin{array}
[c]{cc}%
\lambda Q_{XX} & O\\
O & \left(  1-\lambda\right)  Q_{XX}%
\end{array}
\right)  ^{-1}\\
&  +\left(
\begin{array}
[c]{cc}%
\lambda Q_{XX} & O\\
O & \left(  1-\lambda\right)  Q_{XX}%
\end{array}
\right)  ^{-1}\left(
\begin{array}
[c]{c}%
\lambda Q_{XZ}\\
\left(  1-\lambda\right)  Q_{XZ}%
\end{array}
\right) \\
&  \times\left[  Q_{ZZ}-\left(
\begin{array}
[c]{c}%
\lambda Q_{XZ}\\
\left(  1-\lambda\right)  Q_{XZ}%
\end{array}
\right)  ^{\prime}\left(
\begin{array}
[c]{cc}%
\lambda Q_{XX} & O\\
O & \left(  1-\lambda\right)  Q_{XX}%
\end{array}
\right)  ^{-1}\left(
\begin{array}
[c]{c}%
\lambda Q_{XZ}\\
\left(  1-\lambda\right)  Q_{XZ}%
\end{array}
\right)  \right]  ^{-1}\\
&  \times\left(
\begin{array}
[c]{c}%
\lambda Q_{XZ}\\
\left(  1-\lambda\right)  Q_{XZ}%
\end{array}
\right)  ^{\prime}\left(
\begin{array}
[c]{cc}%
\lambda Q_{XX} & O\\
O & \left(  1-\lambda\right)  Q_{XX}%
\end{array}
\right)  ^{-1}\\
&  =\left(
\begin{array}
[c]{cc}%
\lambda^{-1}Q_{XX}^{-1} & O\\
O & \left(  1-\lambda\right)  ^{-1}Q_{XX}^{-1}%
\end{array}
\right)  +\left(
\begin{array}
[c]{cc}%
\mathbb{Q} & \mathbb{Q}\\
\mathbb{Q} & \mathbb{Q}%
\end{array}
\right)  ,
\end{align*}
where%
\[
\mathbb{Q}=Q_{XX}^{-1}Q_{XZ}Q_{Z\cdot X}^{-1}Q_{ZX}Q_{XX}^{-1}\text{ for
}Q_{Z\cdot X}=Q_{ZZ}-Q_{ZX}Q_{XX}^{-1}Q_{XZ}.
\]
Therefore,%
\begin{equation}
RQ_{\tilde{X}\cdot Z}^{-1}=[\mathcal{R},-\mathcal{R}]\left(
\begin{array}
[c]{cc}%
\lambda^{-1}Q_{XX}^{-1} & O\\
O & \left(  1-\lambda\right)  ^{-1}Q_{XX}^{-1}%
\end{array}
\right)  =\mathcal{R}\left[  \lambda^{-1}Q_{XX}^{-1},-\left(  1-\lambda
\right)  ^{-1}Q_{XX}^{-1}\right]  \label{proof RQi}%
\end{equation}
and
\begin{align*}
&  RQ_{\tilde{X}\cdot Z}^{-1}\left(
\begin{array}
[c]{c}%
\Lambda_{X}W_{m+\ell}\left(  \lambda\right)  -\lambda\Lambda_{XZ}W_{m+\ell
}\left(  1\right) \\
\Lambda_{X}\left[  W_{m+\ell}\left(  1\right)  -W_{m+\ell}\left(
\lambda\right)  \right]  -\left(  1-\lambda\right)  \Lambda_{XZ}W_{m+\ell
}\left(  1\right)
\end{array}
\right) \\
&  =\mathcal{R}\left[  \lambda^{-1}Q_{XX}^{-1},-\left(  1-\lambda\right)
^{-1}Q_{XX}^{-1}\right]  \left(
\begin{array}
[c]{c}%
\Lambda_{X}W_{m+\ell}\left(  \lambda\right)  -\lambda\Lambda_{XZ}W_{m+\ell
}\left(  1\right) \\
\Lambda_{X}\left[  W_{m+\ell}\left(  1\right)  -W_{m+\ell}\left(
\lambda\right)  \right]  -\left(  1-\lambda\right)  \Lambda_{XZ}W_{m+\ell
}\left(  1\right)
\end{array}
\right) \\
&  =\mathcal{R}Q_{XX}^{-1}\left(  \Lambda_{X}\frac{W_{m+\ell}\left(
\lambda\right)  }{\lambda}-\Lambda_{XZ}W_{m+\ell}\left(  1\right)  \right) \\
&  -\mathcal{R}Q_{XX}^{-1}\left(  \Lambda_{X}\frac{W_{m+\ell}\left(  1\right)
-W_{m+\ell}\left(  \lambda\right)  }{1-\lambda}-\Lambda_{XZ}W_{m+\ell}\left(
1\right)  \right) \\
&  =\mathcal{R}Q_{XX}^{-1}\Lambda_{X}\left(  \frac{W_{m+\ell}\left(
\lambda\right)  }{\lambda}-\frac{W_{m+\ell}\left(  1\right)  -W_{m+\ell
}\left(  \lambda\right)  }{1-\lambda}\right)  .
\end{align*}
Hence
\begin{align*}
R\sqrt{T}(\hat{\beta}-\beta)  &  \rightarrow^{d}\mathcal{R}Q_{XX}^{-1}%
\Lambda_{X}\left(  \frac{W_{m+\ell}\left(  \lambda\right)  }{\lambda}%
-\frac{W_{m+\ell}\left(  1\right)  -W_{m+\ell}\left(  \lambda\right)
}{1-\lambda}\right) \\
&  =\mathcal{R}Q_{XX}^{-1}\Lambda_{X}\left[  \frac{1}{\lambda}\int%
_{0}^{\lambda}dW_{m+\ell}\left(  \lambda\right)  -\frac{1}{1-\lambda}%
\int_{\lambda}^{1}dW_{m+\ell}\left(  \lambda\right)  \right]  .
\end{align*}

Part (b) We have%
\begin{align*}
&  R\hat{Q}_{\tilde{X}\cdot Z}^{-1}\frac{1}{\sqrt{T}}\sum_{t=1}^{T}\phi
_{j}\left(  \frac{t}{T}\right)  \tilde{X}_{z,t}^{\prime}\hat{u}_{t}\\
&  =R\hat{Q}_{\tilde{X}\cdot Z}^{-1}\frac{1}{\sqrt{T}}\sum_{t=1}^{T}\phi
_{j}\left(  \frac{t}{T}\right)  \tilde{X}_{z,t}^{\prime}\left(  u_{t}%
-\tilde{X}_{z,t}^{\prime}(\hat{\beta}-\beta)-Z_{t}\left(  Z^{\prime}Z\right)
^{-1}Z^{\prime}u\right) \\
&  =R\hat{Q}_{\tilde{X}\cdot Z}^{-1}\frac{1}{\sqrt{T}}\sum_{t=1}^{T}\phi
_{j}\left(  \frac{t}{T}\right)  \tilde{X}_{z,t}^{\prime}u_{t}^{\ast}-R\hat
{Q}_{\tilde{X}\cdot Z}^{-1}\left[  \frac{1}{T}\sum_{t=1}^{T}\phi_{j}\left(
\frac{t}{T}\right)  \tilde{X}_{z,t}^{\prime}\tilde{X}_{z,t}\right]  \sqrt
{T}(\hat{\beta}-\beta),
\end{align*}
where $u_{t}^{\ast}=u_{t}-Z_{t}\left(  Z^{\prime}Z\right)  ^{-1}Z^{\prime}u.$
To find the limit of the first term in the above equation, we note that%
\begin{align*}
&  \frac{1}{\sqrt{T}}\sum_{t=1}^{T}\phi_{j}\left(  \frac{t}{T}\right)
\tilde{X}_{z,t}^{\prime}u_{t}^{\ast}\\
&  =\frac{1}{\sqrt{T}}\sum_{t=1}^{T}\phi_{j}\left(  \frac{t}{T}\right)
\left(  \tilde{X}_{t}-Z_{t}\left(  Z^{\prime}Z\right)  ^{-1}Z^{\prime}%
\tilde{X}\right)  ^{\prime}\left(  u_{t}-Z_{t}\left(  Z^{\prime}Z\right)
^{-1}Z^{\prime}u\right) \\
&  =\frac{1}{\sqrt{T}}\sum_{t=1}^{T}\phi_{j}\left(  \frac{t}{T}\right)
\tilde{X}_{t}^{\prime}u_{t}-\frac{1}{\sqrt{T}}\sum_{t=1}^{T}\phi_{j}\left(
\frac{t}{T}\right)  \tilde{X}_{t}^{\prime}Z_{t}\left(  Z^{\prime}Z\right)
^{-1}Z^{\prime}u\\
&  -\frac{1}{\sqrt{T}}\sum_{t=1}^{T}\phi_{j}\left(  \frac{t}{T}\right)
\tilde{X}^{\prime}Z\left(  Z^{\prime}Z\right)  ^{-1}Z_{t}^{\prime}u_{t}%
+\frac{1}{\sqrt{T}}\sum_{t=1}^{T}\phi_{j}\left(  \frac{t}{T}\right)  \tilde
{X}^{\prime}Z\left(  Z^{\prime}Z\right)  ^{-1}Z_{t}^{\prime}Z_{t}\left(
Z^{\prime}Z\right)  ^{-1}Z^{\prime}u\\
&  \rightarrow^{d}\left(
\begin{array}
[c]{c}%
\Lambda_{X}\int_{0}^{\lambda}\phi_{j}\left(  r\right)  dW_{m+\ell}\left(
r\right) \\
\Lambda_{X}\int_{\lambda}^{1}\phi_{j}\left(  r\right)  dW_{m+\ell}\left(
r\right)
\end{array}
\right)  -\left(
\begin{array}
[c]{c}%
\int_{0}^{\lambda}\phi_{j}\left(  r\right)  dr\cdot\Lambda_{XZ}W_{m+\ell
}\left(  1\right) \\
\int_{\lambda}^{1}\phi_{j}\left(  r\right)  dr\cdot\Lambda_{XZ}W_{m+\ell
}\left(  1\right)
\end{array}
\right) \\
&  -\left(
\begin{array}
[c]{c}%
\lambda\cdot\Lambda_{XZ}\int_{0}^{1}\phi_{j}\left(  r\right)  dW_{m+\ell
}\left(  r\right) \\
\left(  1-\lambda\right)  \cdot\Lambda_{XZ}\int_{0}^{1}\phi_{j}\left(
r\right)  dW_{m+\ell}\left(  r\right)
\end{array}
\right)  +\left(
\begin{array}
[c]{c}%
\lambda\bar{\phi}_{j,0}\cdot\Lambda_{XZ}W_{m+\ell}\left(  1\right) \\
\left(  1-\lambda\right)  \bar{\phi}_{j,0}\cdot\Lambda_{XZ}W_{m+\ell}\left(
1\right)
\end{array}
\right)  ,
\end{align*}
where $\bar{\phi}_{j,0}=\int_{0}^{1}\phi_{j}\left(  r\right)  dr.$ Therefore,
\begin{align}
&  R\hat{Q}_{\tilde{X}\cdot Z}^{-1}\frac{1}{\sqrt{T}}\sum_{t=1}^{T}\phi
_{j}\left(  \frac{t}{T}\right)  \tilde{X}_{z,t}^{\prime}u_{t}^{\ast
}\nonumber\\
&  \rightarrow^{d}\mathcal{R}Q_{XX}^{-1}\Lambda_{X}\left[  \lambda^{-1}%
\int_{0}^{\lambda}\phi_{j}\left(  r\right)  dW_{m+\ell}\left(  r\right)
-\left(  1-\lambda\right)  ^{-1}\int_{\lambda}^{1}\phi_{j}\left(  r\right)
dW_{m+\ell}\left(  r\right)  \right] \nonumber\\
&  -\mathcal{R}Q_{XX}^{-1}\Lambda_{XZ}\left(  \bar{\phi}_{j,1}-\bar{\phi
}_{j,2}\right)  W_{m+\ell}\left(  1\right)  , \label{RQWU}%
\end{align}
where we have used $R\hat{Q}^{-1}\rightarrow^{d}RQ_{\tilde{X},Z}%
^{-1}=\mathcal{R}\left[  \lambda^{-1}Q_{XX}^{-1},-\left(  1-\lambda\right)
^{-1}Q_{XX}^{-1}\right]  ;$ see (\ref{proof RQi}).

Next,%
\begin{align}
&  \frac{1}{T}\sum_{t=1}^{T}\phi_{j}\left(  \frac{t}{T}\right)  \tilde
{X}_{z,t}^{\prime}\tilde{X}_{z,t}\nonumber\\
&  =\frac{1}{T}\sum_{t=1}^{T}\phi_{j}\left(  \frac{t}{T}\right)  \tilde{X}%
_{t}^{\prime}\tilde{X}_{t}-\frac{1}{T}\sum_{t=1}^{T}\phi_{j}\left(  \frac
{t}{T}\right)  \tilde{X}_{t}^{\prime}Z_{t}\left(  Z^{\prime}Z\right)
^{-1}Z^{\prime}\tilde{X}\nonumber\\
&  -\frac{1}{T}\sum_{t=1}^{T}\phi_{j}\left(  \frac{t}{T}\right)  \tilde
{X}^{\prime}Z\left(  Z^{\prime}Z\right)  ^{-1}Z_{t}^{\prime}\tilde{X}%
_{t}+\frac{1}{T}\sum_{t=1}^{T}\phi_{j}\left(  \frac{t}{T}\right)  \tilde
{X}^{\prime}Z\left(  Z^{\prime}Z\right)  ^{-1}Z_{t}^{\prime}Z_{t}\left(
Z^{\prime}Z\right)  ^{-1}Z^{\prime}\tilde{X}\nonumber\\
&  \rightarrow^{p}\left(
\begin{array}
[c]{cc}%
\left[  \int_{0}^{\lambda}\phi_{j}\left(  r\right)  dr\right]  Q_{XX} & O\\
O & \left[  \int_{\lambda}^{1}\phi_{j}\left(  r\right)  dr\right]  Q_{XX}%
\end{array}
\right) \nonumber\\
&  -\left(
\begin{array}
[c]{c}%
\left[  \int_{0}^{\lambda}\phi_{j}\left(  r\right)  dr\right]  Q_{XZ}%
Q_{ZZ}^{-1}\\
\left[  \int_{\lambda}^{1}\phi_{j}\left(  r\right)  dr\right]  Q_{XZ}%
Q_{ZZ}^{-1}%
\end{array}
\right)  \left[  \lambda Q_{ZX},\left(  1-\lambda\right)  Q_{ZX}\right]
\nonumber\\
&  -\left[  \left(
\begin{array}
[c]{c}%
\left[  \int_{0}^{\lambda}\phi_{j}\left(  r\right)  dr\right]  Q_{XZ}%
Q_{ZZ}^{-1}\\
\left[  \int_{\lambda}^{1}\phi_{j}\left(  r\right)  dr\right]  Q_{XZ}%
Q_{ZZ}^{-1}%
\end{array}
\right)  \left[  \lambda Q_{ZX},\left(  1-\lambda\right)  Q_{ZX}\right]
\right]  ^{\prime}\nonumber\\
&  +\int_{0}^{1}\phi_{j}\left(  r\right)  dr\cdot\left(
\begin{array}
[c]{c}%
\lambda Q_{XZ}\\
\left(  1-\lambda\right)  Q_{XZ}%
\end{array}
\right)  Q_{ZZ}^{-1}\left(
\begin{array}
[c]{c}%
\lambda Q_{XZ}\\
\left(  1-\lambda\right)  Q_{XZ}%
\end{array}
\right)  ^{\prime}. \label{phi_w_w'}%
\end{align}
So,%
\begin{align*}
&  R\hat{Q}_{\tilde{X}\cdot Z}^{-1}\left[  \frac{1}{T}\sum_{t=1}^{T}\phi
_{j}\left(  \frac{t}{T}\right)  \tilde{X}_{z,t}^{\prime}\tilde{X}_{z,t}\right]
\\
&  \rightarrow^{p}\left[  \lambda^{-1}\mathcal{R}Q_{XX}^{-1},-\left(
1-\lambda\right)  ^{-1}\mathcal{R}Q_{XX}^{-1}\right]  \left(
\begin{array}
[c]{cc}%
\left[  \int_{0}^{\lambda}\phi_{j}\left(  r\right)  dr\right]  Q_{XX} & O\\
O & \left[  \int_{\lambda}^{1}\phi_{j}\left(  r\right)  dr\right]  Q_{XX}%
\end{array}
\right) \\
&  -\left[  \lambda^{-1}\mathcal{R}Q_{XX}^{-1},-\left(  1-\lambda\right)
^{-1}\mathcal{R}Q_{XX}^{-1}\right]  \left(
\begin{array}
[c]{c}%
\left[  \int_{0}^{\lambda}\phi_{j}\left(  r\right)  dr\right]  Q_{XZ}%
Q_{ZZ}^{-1}\\
\left[  \int_{\lambda}^{1}\phi_{j}\left(  r\right)  dr\right]  Q_{XZ}%
Q_{ZZ}^{-1}%
\end{array}
\right)  \left(  \lambda Q_{ZX},\left(  1-\lambda\right)  Q_{ZX}\right) \\
&  =\left[  \bar{\phi}_{j,1}\mathcal{R},-\bar{\phi}_{j,2}\mathcal{R}\right]
-\left(  \bar{\phi}_{j,1}-\bar{\phi}_{j,2}\right)  \mathcal{R}Q_{XX}%
^{-1}Q_{XZ}Q_{ZZ}^{-1}\cdot\left[  \lambda Q_{ZX},\left(  1-\lambda\right)
Q_{ZX}\right]  ,
\end{align*}
where we have used the fact that the last two terms in (\ref{phi_w_w'})
pre-multiplied by $\left[  \lambda^{-1}\mathcal{R}Q_{XX}^{-1},-\left(
1-\lambda\right)  ^{-1}\mathcal{R}Q_{XX}^{-1}\right]  $ are equal to zero.

It then follows that
\begin{align*}
&  R\hat{Q}_{\tilde{X}\cdot Z}^{-1}\left[  \frac{1}{T}\sum_{t=1}^{T}\phi
_{j}\left(  \frac{t}{T}\right)  \tilde{X}_{z,t}^{\prime}\tilde{X}%
_{z,t}\right]  \hat{Q}_{\tilde{X}\cdot Z}^{-1}\\
&  \rightarrow^{p}\left[  \bar{\phi}_{j,1}\mathcal{R},-\bar{\phi}%
_{j,2}\mathcal{R}\right]  \left\{  \left(
\begin{array}
[c]{cc}%
\lambda^{-1}Q_{XX}^{-1} & O\\
O & \left(  1-\lambda\right)  ^{-1}Q_{XX}^{-1}%
\end{array}
\right)  +\left(
\begin{array}
[c]{cc}%
\mathbb{Q} & \mathbb{Q}\\
\mathbb{Q} & \mathbb{Q}%
\end{array}
\right)  \right\} \\
&  -\left(  \bar{\phi}_{j,1}-\bar{\phi}_{j,2}\right)  \mathcal{R}Q_{XX}%
^{-1}Q_{XZ}Q_{ZZ}^{-1}\\
&  \times\left[
\begin{array}
[c]{cc}%
\lambda Q_{ZX}, & \left(  1-\lambda\right)  Q_{ZX}%
\end{array}
\right]  \left\{  \left(
\begin{array}
[c]{cc}%
\lambda^{-1}Q_{XX}^{-1} & O\\
O & \left(  1-\lambda\right)  ^{-1}Q_{XX}^{-1}%
\end{array}
\right)  +\left(
\begin{array}
[c]{cc}%
\mathbb{Q} & \mathbb{Q}\\
\mathbb{Q} & \mathbb{Q}%
\end{array}
\right)  \right\} \\
&  =\left[
\begin{array}
[c]{cc}%
\lambda^{-1}\bar{\phi}_{j,1}\mathcal{R}Q_{XX}^{-1}, & -\left(  1-\lambda
\right)  ^{-1}\bar{\phi}_{j,2}\mathcal{R}Q_{XX}^{-1}%
\end{array}
\right] \\
&  -\left(  \bar{\phi}_{j,1}-\bar{\phi}_{j,2}\right)  \mathcal{R}Q_{XX}%
^{-1}Q_{XZ}Q_{ZZ}^{-1}\left[
\begin{array}
[c]{cc}%
Q_{ZX}Q_{XX}^{-1}, & Q_{ZX}Q_{XX}^{-1}%
\end{array}
\right] \\
&  -\left(  \bar{\phi}_{j,1}-\bar{\phi}_{j,2}\right)  \mathcal{R}\left[
Q_{XX}^{-1}Q_{XZ}Q_{ZZ}^{-1}Q_{ZX}-I_{p}\right]  \left[  \mathbb{Q}%
,\mathbb{Q}\right]  .
\end{align*}
As a consequence,
\begin{equation}
R\hat{Q}_{\tilde{X}\cdot Z}^{-1}\left[  \frac{1}{T}\sum_{t=1}^{T}\phi
_{j}\left(  \frac{t}{T}\right)  \tilde{X}_{z,t}^{\prime}\tilde{X}%
_{z,t}\right]  \sqrt{T}(\hat{\beta}-\beta)\rightarrow^{d}I_{1}+I_{2}+I_{3},
\label{I1_I2_I3}%
\end{equation}
where
\begin{align*}
I_{1}  &  =\left[
\begin{array}
[c]{cc}%
\lambda^{-1}\bar{\phi}_{j,1}\mathcal{R}Q_{XX}^{-1}, & -\left(  1-\lambda
\right)  ^{-1}\bar{\phi}_{j,2}\mathcal{R}Q_{XX}^{-1}%
\end{array}
\right] \\
&  \times\left(
\begin{array}
[c]{c}%
\Lambda_{X}W_{m+\ell}\left(  \lambda\right)  -\lambda\Lambda_{XZ}W_{m+\ell
}\left(  1\right) \\
\Lambda_{X}\left[  W_{m+\ell}\left(  1\right)  -W_{m+\ell}\left(
\lambda\right)  \right]  -\left(  1-\lambda\right)  \Lambda_{XZ}W_{m+\ell
}\left(  1\right)
\end{array}
\right) \\
&  =\lambda^{-1}\bar{\phi}_{j,1}\mathcal{R}Q_{XX}^{-1}\Lambda_{X}W_{m+\ell
}\left(  \lambda\right)  -\bar{\phi}_{j,1}\mathcal{R}Q_{XX}^{-1}\Lambda
_{XZ}W_{m+\ell}\left(  1\right) \\
&  -\left(  1-\lambda\right)  ^{-1}\bar{\phi}_{j,2}\mathcal{R}Q_{XX}%
^{-1}\Lambda_{X}\left[  W_{m+\ell}\left(  1\right)  -W_{m+\ell}\left(
\lambda\right)  \right]  +\bar{\phi}_{j,2}\mathcal{R}Q_{XX}^{-1}\Lambda
_{XZ}W_{m+\ell}\left(  1\right) \\
&  =\left(  \lambda^{-1}\bar{\phi}_{j,1}+\left(  1-\lambda\right)  ^{-1}%
\bar{\phi}_{j,2}\right)  \mathcal{R}Q_{XX}^{-1}\Lambda_{X}W_{m+\ell}\left(
\lambda\right) \\
&  -\left[  \left(  \bar{\phi}_{j,1}-\bar{\phi}_{j,2}\right)  \mathcal{R}%
Q_{XX}^{-1}\Lambda_{XZ}+\left(  1-\lambda\right)  ^{-1}\bar{\phi}%
_{j,2}\mathcal{R}Q_{XX}^{-1}\Lambda_{X}\right]  W_{m+\ell}\left(  1\right) \\
&  =\left(  \lambda^{-1}\bar{\phi}_{j,1}+\left(  1-\lambda\right)  ^{-1}%
\bar{\phi}_{j,2}\right)  \mathcal{R}Q_{XX}^{-1}\Lambda_{X}W_{m+\ell}\left(
\lambda\right) \\
&  +\left[  \left(  \bar{\phi}_{j,1}-\bar{\phi}_{j,2}\right)  \mathcal{R}%
Q_{XX}^{-1}\left(  \Lambda_{X}-\Lambda_{XZ}\right)  \right]  W_{m+\ell}\left(
1\right) \\
&  -\left\{  \left(  1-\lambda\right)  ^{-1}\bar{\phi}_{j,2}+\left[  \left(
\bar{\phi}_{j,1}-\bar{\phi}_{j,2}\right)  \right]  \right\}  \mathcal{R}%
Q_{XX}^{-1}\Lambda_{X}W_{m+\ell}\left(  1\right) \\
&  =\left(  \lambda^{-1}\bar{\phi}_{j,1}+\left(  1-\lambda\right)  ^{-1}%
\bar{\phi}_{j,2}\right)  \mathcal{R}Q_{XX}^{-1}\Lambda_{X}W_{m+\ell}\left(
\lambda\right) \\
&  +\left[  \left(  \bar{\phi}_{j,1}-\bar{\phi}_{j,2}\right)  \mathcal{R}%
Q_{XX}^{-1}\left(  \Lambda_{X}-\Lambda_{XZ}\right)  \right]  W_{m+\ell}\left(
1\right) \\
&  -\lambda\left\{  \lambda^{-1}\bar{\phi}_{j,1}+\left(  1-\lambda\right)
^{-1}\bar{\phi}_{j,2}\right\}  \mathcal{R}Q_{XX}^{-1}\Lambda_{X}W_{m+\ell
}\left(  1\right)  ,
\end{align*}%
\begin{align*}
&  I_{2}=-\left(  \bar{\phi}_{j,1}-\bar{\phi}_{j,2}\right)  \mathcal{R}%
Q_{XX}^{-1}Q_{XZ}Q_{ZZ}^{-1}\\
&  \times\left[
\begin{array}
[c]{cc}%
Q_{ZX}Q_{XX}^{-1}, & Q_{ZX}Q_{XX}^{-1}%
\end{array}
\right]  \left(
\begin{array}
[c]{c}%
\Lambda_{X}W_{m+\ell}\left(  \lambda\right)  -\lambda\Lambda_{XZ}W_{m+\ell
}\left(  1\right) \\
\Lambda_{X}\left[  W_{m+\ell}\left(  1\right)  -W_{m+\ell}\left(
\lambda\right)  \right]  -\left(  1-\lambda\right)  \Lambda_{XZ}W_{m+\ell
}\left(  1\right)
\end{array}
\right) \\
&  =-\left(  \bar{\phi}_{j,1}-\bar{\phi}_{j,2}\right)  \mathcal{R}Q_{XX}%
^{-1}Q_{XZ}Q_{ZZ}^{-1}\cdot Q_{ZX}Q_{XX}^{-1}\left[  \Lambda_{X}-Q_{XZ}%
Q_{ZZ}^{-1}\Lambda_{Z}\right]  W_{m+\ell}\left(  1\right)  ,
\end{align*}
and%
\begin{align*}
I_{3}  &  =-\left(  \bar{\phi}_{j,1}-\bar{\phi}_{j,2}\right)  \mathcal{R}%
\left[  Q_{XX}^{-1}Q_{XZ}Q_{ZZ}^{-1}Q_{ZX}-I_{p}\right]  \left[
\mathbb{Q},\mathbb{Q}\right] \\
&  \times\left(
\begin{array}
[c]{c}%
\Lambda_{X}W_{m+\ell}\left(  \lambda\right)  -\lambda\Lambda_{XZ}W_{m+\ell
}\left(  1\right) \\
\Lambda_{X}\left[  W_{m+\ell}\left(  1\right)  -W_{m+\ell}\left(
\lambda\right)  \right]  -\left(  1-\lambda\right)  \Lambda_{XZ}W_{m+\ell
}\left(  1\right)
\end{array}
\right) \\
&  =-\left(  \bar{\phi}_{j,1}-\bar{\phi}_{j,2}\right)  \mathcal{R}\left[
Q_{XX}^{-1}Q_{XZ}Q_{ZZ}^{-1}Q_{ZX}-I_{p}\right]  \mathbb{Q}\left[  \Lambda
_{X}-Q_{XZ}Q_{ZZ}^{-1}\Lambda_{Z}\right]  W_{m+\ell}\left(  1\right)  .
\end{align*}
Plugging the above three terms $I_{1},I_{2},$ and $I_{3}$ back into
(\ref{I1_I2_I3}), we obtain:%
\begin{align}
&  R\hat{Q}_{\tilde{X}\cdot Z}^{-1}\left[  \frac{1}{T}\sum_{t=1}^{T}\phi
_{j}\left(  \frac{t}{T}\right)  \tilde{X}_{z,t}^{\prime}\tilde{X}%
_{z,t}\right]  \sqrt{T}\left(  \hat{\beta}-\beta\right) \nonumber\\
&  \rightarrow^{d}\left(  \lambda^{-1}\bar{\phi}_{j,1}+\left(  1-\lambda
\right)  ^{-1}\bar{\phi}_{j,2}\right)  \mathcal{R}Q_{XX}^{-1}\Lambda
_{X}W_{m+\ell}\left(  \lambda\right) \nonumber\\
&  -\lambda\left(  \lambda^{-1}\bar{\phi}_{j,1}+\left(  1-\lambda\right)
^{-1}\bar{\phi}_{j,2}\right)  \mathcal{R}Q_{XX}^{-1}\Lambda_{X}W_{m+\ell
}\left(  1\right) \nonumber\\
&  +\left(  \bar{\phi}_{j,1}-\bar{\phi}_{j,2}\right)  \mathcal{R}Q_{XX}%
^{-1}\left[  I_{p}-Q_{XZ}Q_{ZZ}^{-1}Q_{ZX}Q_{XX}^{-1}-\left(  Q_{XZ}%
Q_{ZZ}^{-1}Q_{ZX}-Q_{XX}\right)  \mathbb{Q}\right] \nonumber\\
&  \times\left(  \Lambda_{X}-Q_{XZ}Q_{ZZ}^{-1}\Lambda_{Z}\right)  W_{m+\ell
}\left(  1\right) \nonumber\\
&  =\left(  \lambda^{-1}\bar{\phi}_{j,1}+\left(  1-\lambda\right)  ^{-1}%
\bar{\phi}_{j,2}\right)  \mathcal{R}Q_{XX}^{-1}\Lambda_{X}\left[  W_{m+\ell
}\left(  \lambda\right)  -\lambda W_{m+\ell}\left(  1\right)  \right]
\nonumber\\
&  +\left(  \bar{\phi}_{j,1}-\bar{\phi}_{j,2}\right)  \mathcal{R}Q_{XX}%
^{-1}\left(  \Lambda_{X}-Q_{XZ}Q_{ZZ}^{-1}\Lambda_{Z}\right)  W_{m+\ell
}\left(  1\right)  , \label{RQWW_beta}%
\end{align}
where the last equality holds because%
\begin{align*}
&  Q_{XX}^{-1}\left[  I_{p}-Q_{XZ}Q_{ZZ}^{-1}Q_{ZX}Q_{XX}^{-1}-\left(
Q_{XZ}Q_{ZZ}^{-1}Q_{ZX}-Q_{XX}\right)  \mathbb{Q}\right] \\
&  =Q_{XX}^{-1}-Q_{XX}^{-1}Q_{XZ}Q_{ZZ}^{-1}Q_{ZX}Q_{XX}^{-1}-Q_{XX}%
^{-1}Q_{XZ}Q_{ZZ}^{-1}Q_{ZX}Q_{XX}^{-1}Q_{XZ}Q_{Z\cdot X}^{-1}Q_{ZX}%
Q_{XX}^{-1}\\
&  =Q_{XX}^{-1}-Q_{XX}^{-1}Q_{XZ}Q_{ZZ}^{-1}Q_{ZX}Q_{XX}^{-1}\\
&  +Q_{XX}^{-1}Q_{XZ}Q_{ZZ}^{-1}\left(  -Q_{ZX}Q_{XX}^{-1}Q_{XZ}\right)
\left(  Q_{ZZ}-Q_{ZX}Q_{XX}^{-1}Q_{XZ}\right)  ^{-1}Q_{ZX}Q_{XX}%
^{-1}+\mathbb{Q}\\
&  =Q_{XX}^{-1}-Q_{XX}^{-1}Q_{XZ}Q_{ZZ}^{-1}Q_{ZX}Q_{XX}^{-1}\\
&  +Q_{XX}^{-1}Q_{XZ}Q_{ZZ}^{-1}\left(  Q_{ZZ}-Q_{ZX}Q_{XX}^{-1}Q_{XZ}\right)
\left(  Q_{ZZ}-Q_{ZX}Q_{XX}^{-1}Q_{XZ}\right)  ^{-1}Q_{ZX}Q_{XX}^{-1}\\
&  -Q_{XX}^{-1}Q_{XZ}Q_{ZZ}^{-1}Q_{ZZ}\left(  Q_{ZZ}-Q_{ZX}Q_{XX}^{-1}%
Q_{XZ}\right)  ^{-1}Q_{ZX}Q_{XX}^{-1}+\mathbb{Q}\\
&  =Q_{XX}^{-1}-Q_{XX}^{-1}Q_{XZ}Q_{ZZ}^{-1}Q_{ZX}Q_{XX}^{-1}+Q_{XX}%
^{-1}Q_{XZ}Q_{ZZ}^{-1}Q_{ZX}Q_{XX}^{-1}\\
&  -Q_{XX}^{-1}Q_{XZ}\left(  Q_{ZZ}-Q_{ZX}Q_{XX}^{-1}Q_{XZ}\right)
^{-1}Q_{ZX}Q_{XX}^{-1}+\mathbb{Q}\\
&  =Q_{XX}^{-1}-Q_{XX}^{-1}Q_{XZ}\left(  Q_{ZZ}-Q_{ZX}Q_{XX}^{-1}%
Q_{XZ}\right)  ^{-1}Q_{ZX}Q_{XX}^{-1}+Q_{XX}^{-1}Q_{XZ}Q_{Z\cdot X}^{-1}%
Q_{ZX}Q_{XX}^{-1}\\
&  =Q_{XX}^{-1}.
\end{align*}
Using (\ref{RQWU}) and (\ref{RQWW_beta}), we have
\begin{align*}
&  R\hat{Q}_{\tilde{X}\cdot Z}^{-1}\frac{1}{\sqrt{T}}\sum_{t=1}^{T}\phi
_{j}\left(  \frac{t}{T}\right)  \tilde{X}_{z,t}^{\prime}\hat{u}_{t}\\
&  \rightarrow^{d}\mathcal{R}Q_{XX}^{-1}\Lambda_{X}\left[  \lambda^{-1}%
\int_{0}^{\lambda}\phi_{j}\left(  r\right)  dW_{m+\ell}\left(  r\right)
-\left(  1-\lambda\right)  ^{-1}\int_{\lambda}^{1}\phi_{j}\left(  r\right)
dW_{m+\ell}\left(  r\right)  \right] \\
&  -\mathcal{R}Q_{XX}^{-1}\Lambda_{X}\left[  \lambda^{-1}\bar{\phi}%
_{j,1}+\left(  1-\lambda\right)  ^{-1}\bar{\phi}_{j,2}\right]  W_{m+\ell
}\left(  \lambda\right) \\
&  +\mathcal{R}Q_{XX}^{-1}\Lambda_{X}\left[  \bar{\phi}_{j,1}+\lambda\left(
1-\lambda\right)  ^{-1}\bar{\phi}_{j,2}-\left(  \bar{\phi}_{j,1}-\bar{\phi
}_{j,2}\right)  \right]  W_{m+\ell}\left(  1\right) \\
&  =\mathcal{R}Q_{XX}^{-1}\Lambda_{X}\left[  \lambda^{-1}\int_{0}^{\lambda
}\phi_{j}\left(  r\right)  dW_{m+\ell}\left(  r\right)  -\left(
1-\lambda\right)  ^{-1}\int_{\lambda}^{1}\phi_{j}\left(  r\right)  dW_{m+\ell
}\left(  r\right)  \right] \\
&  -\mathcal{R}Q_{XX}^{-1}\Lambda_{X}\left[  \lambda^{-1}\bar{\phi}%
_{j,1}+\left(  1-\lambda\right)  ^{-1}\bar{\phi}_{j,2}\right]  W_{m+\ell
}\left(  \lambda\right) \\
&  +\mathcal{R}Q_{XX}^{-1}\Lambda_{X}\left[  \left(  1-\lambda\right)
^{-1}\bar{\phi}_{j,2}\right]  W_{m+\ell}\left(  1\right) \\
&  =\mathcal{R}Q_{XX}^{-1}\Lambda_{X}\left[  \lambda^{-1}\int_{0}^{\lambda
}\left(  \phi_{j}\left(  r\right)  -\bar{\phi}_{j,1}\right)  dW_{m+\ell
}\left(  r\right)  -\left(  1-\lambda\right)  ^{-1}\int_{\lambda}^{1}\left(
\phi_{j}\left(  r\right)  -\bar{\phi}_{j,2}\right)  dW_{m+\ell}\left(
r\right)  \right] \\
&  =\mathcal{R}Q_{XX}^{-1}\Lambda_{X}\int_{0}^{1}\tilde{\phi}_{j}\left(
r;\lambda\right)  dW_{m+\ell}\left(  r\right)  .
\end{align*}

Part (c). We prove the case for $F_{T}$ only, as the proof for $t_{T}$ is
similar. Using Parts (a) and (b), we have
\begin{align*}
F_{T}  &  =T\cdot(R\hat{\beta})^{\prime}\left[  R\hat{Q}^{-1}\hat{\Omega
}Q^{-1}R^{\prime}\right]  ^{-1}R\hat{\beta}\\
&  \rightarrow^{d}\left[  \mathcal{R}Q_{XX}^{-1}\Lambda_{X}\left(  \frac
{1}{\lambda}\int_{0}^{\lambda}dW_{m+\ell}\left(  \lambda\right)  -\frac
{1}{1-\lambda}\int_{\lambda}^{1}dW_{m+\ell}\left(  \lambda\right)  \right)
\right]  ^{\prime}\\
&  \times\left\{  \sum_{j=1}^{K}\left[  \mathcal{R}Q_{XX}^{-1}\Lambda\int%
_{0}^{1}\tilde{\phi}_{j}\left(  r;\lambda\right)  dW_{m+\ell}\left(  r\right)
\right]  ^{\otimes2}\right\}  ^{-1}\\
&  \times\left[  \mathcal{R}Q_{XX}^{-1}\Lambda_{X}\left(  \frac{1}{\lambda
}\int_{0}^{\lambda}dW_{m+\ell}\left(  \lambda\right)  -\frac{1}{1-\lambda}%
\int_{\lambda}^{1}dW_{m+\ell}\left(  \lambda\right)  \right)  \right]  .
\end{align*}
Note that
\[
\mathcal{R}Q_{XX}^{-1}\Lambda_{X}W_{m+\ell}\left(  \lambda\right)
=^{d}\mathcal{A}_{p}W_{p}\left(  \lambda\right)
\]
for a $p\times p$ invertible matrix $\mathcal{A}_{p}$ such that $\mathcal{A}%
_{p}^{\prime}\mathcal{A}_{p}=\mathcal{R}Q_{XX}^{-1}\Lambda_{X}\Lambda
_{X}^{\prime}Q_{XX}^{-1}\mathcal{R}^{\prime}.$ Using this distributional
equivalence, we have%
\begin{align*}
F_{T}  &  =T\cdot(R\hat{\beta})^{\prime}\left[  R\hat{Q}^{-1}\hat{\Omega
}Q^{-1}R^{\prime}\right]  ^{-1}R\hat{\beta}\\
&  \rightarrow^{d}\left[  \mathcal{A}_{p}\left(  \frac{1}{\lambda}\int%
_{0}^{\lambda}dW_{p}\left(  \lambda\right)  -\frac{1}{1-\lambda}\int_{\lambda
}^{1}dW_{p}\left(  \lambda\right)  \right)  \right]  ^{\prime}\times\left\{
\frac{1}{K}\sum_{j=1}^{K}\left[  \mathcal{A}_{p}\int_{0}^{1}\tilde{\phi}%
_{j}\left(  r;\lambda\right)  dW_{p}\left(  r\right)  \right]  ^{\otimes
2}\right\}  ^{-1}\\
&  \times\left[  \mathcal{A}_{p}\left(  \frac{1}{\lambda}\int_{0}^{\lambda
}dW_{p}\left(  \lambda\right)  -\frac{1}{1-\lambda}\int_{\lambda}^{1}%
dW_{p}\left(  \lambda\right)  \right)  \right] \\
&  =\left(  \frac{1}{\lambda}\int_{0}^{\lambda}dW_{p}\left(  \lambda\right)
-\frac{1}{1-\lambda}\int_{\lambda}^{1}dW_{p}\left(  \lambda\right)  \right)
^{\prime}\times\left\{  \frac{1}{K}\sum_{j=1}^{K}\left[  \int_{0}^{1}%
\tilde{\phi}_{j}\left(  r;\lambda\right)  dW_{p}\left(  r\right)  \right]
^{\otimes2}\right\}  ^{-1}\\
&  \times\left(  \frac{1}{\lambda}\int_{0}^{\lambda}dW_{p}\left(
\lambda\right)  -\frac{1}{1-\lambda}\int_{\lambda}^{1}dW_{p}\left(
\lambda\right)  \right)  ,
\end{align*}
as desired.
\end{proof} 

\bibliographystyle{apalike}
\bibliography{references}

\end{document}